\newtheorem{theorem}{Theorem}[section]
\newtheorem{lemma}{Lemma}[section]
\newtheorem{definition}{Definition}[section]
\newtheorem{notation}{Notation}[section]
\newcommand\lpm{$\la\Pi$-calculus Modulo}
\newcommand\AxC[1]{\AxiomC{#1}}
\newcommand\UIC[1]{\UnaryInfC{#1}}
\newcommand\BIC[1]{\BinaryInfC{#1}}
\newcommand\TIC[1]{\TrinaryInfC{#1}}
\newcommand\QIC[1]{\QuaternaryInfC{#1}}
\newcommand{\DP}{\DisplayProof}
\newcommand\ud[1]{\underline{#1}}
\newcommand\la{\lambda}
\newcommand\be{\beta}
\newcommand\si{\sigma}
\newcommand\De{\Delta}
\newcommand\Ga{\Gamma}
\newcommand\Si{\Sigma}
\newcommand\Type{{\bf Type}}
\newcommand\Kind{{\bf Kind}}
\newcommand\wf[1]{#1\ {\bf wf}}
\newcommand\typctx[1]{\Ga\vdash^{ctx}#1}
\newcommand\typ[4]{#1;#2\vdash#3:#4}
\newcommand\typg[3]{\Ga;#1\vdash#2:#3}
\newcommand\typrr[2]{\Ga\vdash#1\rw#2}
\newcommand\rw{\hookrightarrow}
\newcommand\eqbg{\equiv_{\be\Ga}}
\newcommand\eqb{\equiv_{\be}}
\newcommand\OCst{\mathcal{C}_O}
\newcommand\TCst{\mathcal{C}_T}
\newcommand\Var{\mathcal{V}}
\newcommand\Sig{{\bf Sig}}
\newcommand\ar{\longrightarrow}
\newcommand\redbg{\rightarrow_{\be\Ga}}
\newcommand\redg{\rightarrow_{\Ga}}
\newcommand\redb{\rightarrow_{\be}}
\newcommand\redbgb{\rightarrow_{\be\Ga^{b}}}
\newcommand\redgb{\rightarrow_{\Ga^{b}}}
\newcommand\eqbgb{\equiv_{\be\Ga^{b}}}
\newcommand\Nat{{\tt Nat}}
\newcommand\Succ{{\tt S}}
\newcommand\Plus{{\tt plus}}
\newcommand\List{{\tt List}}
\newcommand\Map{{\tt Map}}
\newcommand\Nil{{\tt Nil}}
\newcommand\Cons{{\tt Cons}}
\newcommand\Diff{{\tt D}}
\newcommand\Exp{{\tt Exp}}
\newcommand\Fmult{{\tt fMult}}
\newcommand\Base{\mathcal{B}}
\newcommand\ula{\ud{\la}}
\newcommand\bel{\updownarrow^\eta_\be}
\newcommand\red{\rightarrow}
\newcommand\hrsTerm{{\tt Term}}
\newcommand\hrsType{{\tt Type}}
\newcommand\hrsKind{{\tt Kind}}
\newcommand\hrsApp{{\tt App}}
\newcommand\hrsLam{{\tt Lam}}
\newcommand\hrsPi{{\tt Pi}}
\newcommand\emb[1]{\Vert#1\Vert}
\newcommand\embfv[2]{\Vert#1\Vert_{#2}}
\title{Rewriting Modulo $\be$ in the $\la\Pi$-Calculus Modulo}
\author{Ronan Saillard
	\institute{MINES ParisTech, PSL Research University, France}
	\email{ronan.saillard@mines-paristech.fr}
}
\begin{document}
\maketitle

\begin{abstract}
	The \lpm{} is a variant of the $\la$-calculus with dependent types
	where $\be$-conversion is extended with user-defined rewrite rules.
	It is an expressive logical framework and has been used to encode
	logics and type systems in a shallow way.
	Basic properties such as subject reduction or uniqueness of types do
	not hold in general in the \lpm{}. However, they hold if the rewrite
	system generated by the rewrite rules together with $\be$-reduction
	is confluent.
	But this is too restrictive.
	To handle the case where non confluence comes from the interference
	between the $\be$-reduction and rewrite rules with $\la$-abstraction
	on their left-hand side, we introduce a notion of rewriting modulo $\be$ for the \lpm{}.
	We prove that confluence of rewriting modulo $\be$ is enough to ensure subject
	reduction and uniqueness of types.
	We achieve our goal by encoding the \lpm{} into Higher-Order Rewrite System (HRS).
	As a consequence, we also make the confluence results for HRSs available for the \lpm{}.
\end{abstract}

\section{Introduction}
\label{sec:introduction}

The \lpm{} is a variant of the $\la$-calculus with dependent types ($\la\Pi$-calculus or LF)
where $\be$-conversion is extended with user-defined rewrite rules.
Since its introduction by Cousineau and Dowek~\cite{CousineauDowek07},
it has been used as a logical framework to express different logics and type systems.
A key advantage of rewrite rules is that they allow designing {\em shallow}\/ embeddings,
that is embeddings that preserve the computational content of the encoded system.
It has been used, for instance, to encode functional Pure Type Systems~\cite{CousineauDowek07},
First-Order Logic~\cite{DorraFO},
Higher-Order Logic~\cite{AliHOL},
the Calculus of Inductive Constructions~\cite{Coqine},
resolution and superposition proofs~\cite{Resolution},
and the $\varsigma$-calculus~\cite{RaphaelSigma}.

The expressive power of the \lpm{} comes at a cost:
basic properties such as subject reduction or uniqueness of types
do not hold in general.
Therefore, one has to prove these properties for each particular set of rewrite rules considered.
The usual way to do so is to prove that the rewriting relation generated by the
rewrite rules together with $\be$-reduction is confluent.
This entails a property called product compatibility (also known as $\Pi$-injectivity
or injectivity of function types) which,
in turn, implies both subject reduction and uniqueness of types.
Another important consequence of confluence is that, together with termination,
it implies the decidability of the corresponding congruence.
Indeed, for confluent and terminating relations, checking congruence boils down
to a syntactic equality check between normal forms.
As a direct corollary, we get the decidability of type checking in the \lpm{}
for the corresponding rewrite relations.

One case where confluence is easily lost is if one allows rewrite rules with $\la$-abstractions
on their left-hand side.
For instance, consider the following rewrite rule
(which reflects the mathematical equality $(e^f)' = f'*e^f$):

$$ \Diff~(\la x:R.\Exp~(f~x)) \rw \Fmult~(\Diff~(\la x:R.f~x))~(\la x:R.\Exp~(f~x)). $$

This rule introduces a non-joinable critical peak when combined with $\be$-reduction:
\\[0.5cm]
\begin{tikzpicture}
	\node (n1) at (8,2) {$\Diff~(\la x:R.\Exp~((\la y:R.y)~x))$};
	\node (n2) at (4,0) {$\Fmult~(\Diff~(\la x:R.(\la y:R.y)~x))~(\la x:R.(\Exp~((\la y:R.y)~x)))$};
	\node (n3) at (12,0) {$\Diff~(\la x:R.\Exp~x)$};

	\draw[->] (n1) -- (n2) node[pos=0.7,left]{$\begin{array}{l}\Diff\\ \end{array}$};
	\draw[->] (n1) -- (n3) node[pos=0.7,right]{$\begin{array}{l}\be\\ \end{array}$};
\end{tikzpicture}

A way to recover confluence is to consider a generalized rewriting relation where matching
is done modulo $\be$-reduction.
In this setting $\Diff~(\la x:R.\Exp~x)$ is reducible because it is $\be$-equivalent
to the redex $\Diff~(\la x:R.\Exp ((\la y:R.y)~x))$ and, as we will see, this allows
closing the critical peak.

In this paper, we formalize the notion of {\em rewriting modulo} $\be$ in the context of the \lpm.
We achieve this by encoding the \lpm{} into Nipkow's Higher-Order Rewrite Systems~\cite{Nipkow-LICS-91}.
This encoding allows us,
first, to properly define matching modulo $\be$ using the notion of higher order rewriting and,
secondly, to make available, in the \lpm{}, confluence and termination criteria designed for
higher-order rewriting.
Then we prove that the assumption of confluence for the rewriting modulo $\be$ relation can be used,
in most proofs, in place of standard confluence.
In particular this implies subject reduction (for both standard rewriting and rewriting modulo $\be$)
and uniqueness of types.

The paper is organized as follows.
First, we define in \autoref{sec:lpm} the $\la\Pi$-calculus modulo for which we prove subject reduction
and uniqueness of types under the assumption of product compatibility and we show that confluence implies this latter property.
In \autoref{sec:naive-beta-rewriting}, we show that a naive definition of rewriting modulo $\be$
does not work in a typed setting.
This leads us to use Higher-Order Rewrite Systems which we present in \autoref{sec:higher-order-rewrite-systems}
and in which we encode the \lpm{} in \autoref{sec:encoding}.
Then, we use this encoding to properly define rewriting modulo $\be$ in \autoref{sec:rewriting-modulo-beta}
and generalize the results of the previous sections.
We discuss possible applications in~\autoref{sec:Applications} before concluding in~\autoref{sec:conclusion}.

\section{The \texorpdfstring{$\la\Pi$}{}-Calculus Modulo}
\label{sec:lpm}

The \lpm{} is an extension of the dependently-typed $\la$-calculus ($\la\Pi$-calculus)
where the $\be$-conversion is extended by user-defined rewrite rules.

\subsection{Terms}
\label{ssec:lpm-terms}

The terms of the \lpm{} are the same as the terms of the $\la\Pi$-calculus.
Their syntax is given in \autoref{fig:syntax-term}.

\begin{figure}
	\begin{mdframed}
		\centering
		\begin{tabular}{llll}
			$x, y, z$		&	$\in$	&	$\Var$ 	& (Variable) \\
			$\ud{c},\ud{f}$ 			&	$\in$	&	$\OCst$ 	& (Object Constant) \\
			$C, F$ 			&	$\in$	&	$\TCst$		& (Type Constant) \\
			$\ud{t},\ud{u},\ud{v}$		&::=	&
			$x~|~\ud{c}~|~\ud{u}~\ud{v}~|~\la x:U.\ud{t}$ & (Object) \\
			$U,V$						&::=	&
			$C~|~U~\ud{v}~|~\la x:U.V~|~\Pi x:U.V$ & (Type)\\
			$K$							&::=	&
			$\Type~|~\Pi x:U.K$ & (Kind) \\
			$t,u,v$							&::=	&
			$\ud{u}~|~U~|~K~|~\Kind$	& (Term)
		\end{tabular}
		\caption{The terms of the \lpm{}}\label{fig:syntax-terms}
		\label{fig:syntax-term}
	\end{mdframed}
\end{figure}

\begin{definition}[Object, Type, Kind, Term]\label{def:term}
	A \emph{term} is either an \emph{object}, a \emph{type}, a \emph{kind} or the symbol $\Kind$.

	An object is
	either a \emph{variable} in the set $\Var$,
	or an \emph{object constant} in the set $\OCst$,
	or an \emph{application} $\ud{u}$ $\ud{v}$ of two objects,
	or an \emph{abstraction} $\la x:A.\ud{t}$ where $A$ is a type and $\ud{t}$ is an object.

	A type is
	either a \emph{type constant} in the set $\TCst$,
	or an \emph{application} $U$ $\ud{v}$ where $U$ is a type and $\ud{v}$ is an object,
	or an \emph{abstraction} $\la x:U.V$ where $U$ and $V$ are types,
	or a \emph{product} $\Pi x:U.V$ where $U$ and $V$ are types.

	A kind is either a \emph{product} $\Pi x:U.K$ where $U$ is a type and $K$ is a kind
	or the symbol $\Type$.

    $\Type$ and $\Kind$ are called \emph{sorts}.

	The sets $\Var$, $\OCst$ and $\TCst$ are assumed to be infinite and pairwise disjoint.
\end{definition}

\begin{definition}
	A term is \emph{algebraic} if it is not a variable, it is built from constants, variables and applications
	and variables do not have arguments.
\end{definition}


\begin{notation}
	In addition to the naming convention of \autoref{fig:syntax-terms}, we use
	$A$ and $B$ to denote types or kinds;
	$T$ to denote a type, a kind or $\Kind$;
	$s$ for $\Type$ or $\Kind$.

	Moreover, we write $t\vec{u}$ to denote the application of $t$ to an arbitrary number of
	arguments $u_1, \ldots, u_n$.
	We write $u[x/v]$ for the usual (capture-avoiding) substitution of $x$ by $v$ in $u$.
	We write $A \ar B$ for $\Pi x:A.B$ when $B$ does not depend on $x$.
\end{notation}

\subsection{Contexts}
\label{ssec:lpm-contexts}

We distinguish two kinds of context: local and global contexts.
A local context is a list of typing declarations corresponding to variables.
The syntax for contexts is given in~\autoref{fig:syntax-contexts-lpm}.

\begin{figure}
	\begin{mdframed}
		\centering
		\begin{tabular}{llll}
			$\De$ &::= & $\emptyset~|~\De(x:U)$	& (Local Context) \\
			$\Ga$ &::= & $\emptyset~|~\Ga(\ud{c}:U)~|~\Ga(C:K)~|~\Ga(\ud{u}\rw \ud{v})~|~\Ga(U\rw V)$	& (Global Context)
		\end{tabular}
		\caption{Syntax for contexts}\label{fig:syntax-contexts-lpm}
	\end{mdframed}
\end{figure}

\begin{definition}[Local Context]\label{def:lpm-local-context}
	A \emph{local context} is a list of variable declarations (variables together with their type).
\end{definition}

Following our previous work~\cite{IWIL13}, we give a presentation of the \lpm{}
where the rewrite rules are internalized in the system as part of the global context.
This is a difference with earlier presentations~\cite{CousineauDowek07} where the rewrite
rules lived \emph{outside} the system and were typed in an external system
(either the simply-typed calculus or the $\la\Pi$-calculus).
The main benefit of this approach is that the typing of the rewrite rules is made explicit
and becomes an iterative process: rewrite rules previously added in the system can be
used to type new ones.

\begin{definition}
    A \emph{rewrite rule} is a pair of terms.
    We distinguish \emph{object-level rewrite rules} (pairs of objects)
    from \emph{type-level rewrite rules} (pairs of types).

    These are the only allowed rewrite rules.
    We write $(u\rw v)$ for the rewrite rule $(u,v)$.

    It is \emph{left-algebraic} if $u$ is algebraic
    and \emph{left-linear} if no free variable occurs twice in $u$.
\end{definition}

\begin{definition}[Global Context]\label{def:global-context}
	A \emph{global context} is a list of object declarations (an object constant together with a type),
	type declarations (a type constant together with a kind), object-level rewrite rules
	and type-level rewrite rules.
\end{definition}

\subsection{Rewriting}
\label{ssec:lpm-rewriting}

\begin{definition}[$\be$-reduction]\label{def:beta-reduction}
	The \emph{$\be$-reduction} relation $\redb$ is the smallest relation on terms
	containing $(\la x:A.u)v \redb u[x/v]$, for all terms $A,u$ and $v$,
	and closed by subterm rewriting.
\end{definition}

\begin{definition}[$\Ga$-reduction]\label{def:gamma-reduction}
	Let $\Ga$ be a global context.
	The \emph{$\Ga$-reduction} relation $\redg$ is the smallest relation on terms
	containing $u \redg v$ for every rewrite rule $(u\rw v)\in\Ga$,
	closed by substitution and by subterm rewriting.
	We say that $\redg$ is \emph{left-algebraic} (respectively \emph{left-linear})
	if the rewrite rules in $\Ga$ are left-algebraic~(respectively left-linear).
\end{definition}

\begin{notation}
	We write $\redbg$ for $\redb\cup\redg$,
	$\eqb$ for the congruence generated by $\redb$
	and $\eqbg$ the congruence generated by $\redbg$.
\end{notation}

It is important to notice that these notions of reduction are defined as relations on all (untyped) terms.
In particular, we do not require the substitutions to be well-typed.
This allows defining the notion of rewriting independently from the notion of typing (see below).
This makes the system closer from what we would implement in practice.

Since the rewrite rules are either object-level or type-level, rewriting preserves the
three syntactic categories (object, type, kind).
Moreover, sorts are only convertible to themselves.

\subsection{Type System}
\label{ssec:lpm-type-system}
We now give the typing rules for the \lpm{}.
We begin by the inference rules for terms, then for local contexts and finally for global contexts.

\begin{figure}
	\begin{mdframed}
		\centering
		\begin{tabular}{lc}
			(\bf Sort) &
			\AxC{}
			\UIC{$\typg{\De}{\Type}{\Kind}$}
			\DP{}
			\\[0.5cm]

			(\bf Variable) &
			\AxC{$(x:A)\in\De$}
			\UIC{$\typg{\De}{x}{A}$}
			\DP{}
			\\[0.5cm]

			(\bf Constant) &
			\AxC{$(c:A)\in\Ga$}
			\UIC{$\typg{\De}{c}{A}$}
			\DP{}
			\\[0.5cm]

			(\bf Application) &
			\AxC{$\typg{\De}{t}{\Pi x:A.B}$}
			\AxC{$\typg{\De}{u}{A}$}
			\BIC{$\typg{\De}{tu}{B[x/u]}$}
			\DP{}
			\\[0.5cm]

			(\bf Abstraction) &
			\AxC{$\typg{\De}{A}{\Type}$}
			\AxC{$\typg{\De(x:A)}{t}{B}$}
			\AxC{$B\ne\Kind$}
			\TIC{$\typg{\De}{\la x:A.t}{\Pi x:A.B}$}
			\DP{}
			\\[0.5cm]

			(\bf Product) &
			\AxC{$\typg{\De}{A}{\Type}$}
			\AxC{$\typg{\De(x:A)}{B}{s}$}
			\BIC{$\typg{\De}{\Pi x:A.B}{s}$}
			\DP{}
			\\[0.5cm]

			(\bf Conversion) &
			\AxC{$\typg{\De}{t}{A}$}
			\AxC{$\typg{\De}{B}{s}$}
			\AxC{$A \eqbg B$}
			\TIC{$\typg{\De}{t}{B}$}
			\DP{}
		\end{tabular}
		\caption{Typing rules for terms in the \lpm.}\label{fig:typing-lpm}
	\end{mdframed}
\end{figure}

\begin{definition}[Well-Typed Term]\label{def:well-typed-term}
	We say that a term $t$ \emph{has type} $A$ in the global context $\Ga$ and the local context $\De$ if
	the judgment $\typg{\De}{t}{A}$ is derivable by the inference rules
	of \autoref{fig:typing-lpm}.
	We say that a term is \emph{well-typed} if such $A$ exists.
\end{definition}

The typing rules only differ from the usual typing rules for the $\la\Pi$-calculus by the {\bf (Conversion)}
rule where the congruence is extended from $\be$-conversion to $\be\Ga$-conversion allowing taking into
account the rewrite rules in the global context.

\begin{figure}
	\begin{mdframed}
		\centering
		\begin{tabular}{lc}
			(\bf Empty Local Context) &
			\AxC{}
			\UIC{$\typctx{\emptyset}$}
			\DP{}
			\\[0.5cm]

			(\bf Variable Declaration) &
			\AxC{$\typctx{\De}$}
			\AxC{$\typg{\De}{U}{\Type}$}
			\AxC{$x\notin dom(\De)$}
			\TIC{$\typctx{\De(x:U)}$}
			\DP{}
		\end{tabular}
		\caption{Typing rules for local contexts}\label{fig:typing-local-contexts}
	\end{mdframed}
\end{figure}

\begin{definition}[Well-Formed Local Context]\label{def:well-formed-contexts-lpm}
	A local context $\De$ is \emph{well-formed} with respect to a global context $\Ga$
	if the judgment $\typctx{\De}$ is derivable by the inference rules
	of \autoref{fig:typing-local-contexts}.
\end{definition}

Well-formed local contexts ensure that local declarations are unique and well-typed.

Besides the new conversion relation, the main difference between the $\la\Pi$-calculus and the \lpm{}
is the presence of rewrite rules in global contexts. We need to take this into account when
typing global contexts.

A key feature of any type system is the preservation of typing by reduction: the subject reduction property.
\begin{definition}[Subject Reduction]
    Let $\Ga$ be a global context.
	We say that a rewriting relation $\red$ satisfies the \emph{subject reduction} property in $\Ga$
	if, for all terms $t_1,t_2,T$ and local context $\De$ such that $\typctx{\De}$,
    $\typg{\De}{t_1}{T}$ and $t_1 \red t_2$ imply $\typg{\De}{t_2}{T}$.
\end{definition}
In the \lpm{}, we cannot allow adding arbitrary rewrite rules in the context, if we want to preserve subject reduction.
In particular, to prove subject reduction for the $\be$-reduction we need the following property:
\begin{definition}[Product-Compatibility]
	We say that a global context $\Ga$ satisfies the \emph{product compatibility} property
	(and we note ${\bf PC}(\Ga)$) if the following proposition is verified:\\
	if $\Pi x:A_1.B_1$ and $\Pi x:A_2.B_2$ are two
	well-typed product types in the same well-formed local context
	such that $\Pi x:A_1.B_1 \eqbg \Pi x:A_2.B_2$
	then $A_1 \eqbg A_2$ and $B_1 \eqbg B_2$.
\end{definition}

On the other hand, subject reduction for the $\Ga$-reduction requires rewrite rules to be well-typed
in the following sense:
\begin{definition}[Well-typed Rewrite Rules]\ 
	\begin{itemize}
		\item A rewrite rule $(u\rw v)$ is \emph{well-typed} for a global context $\Ga$
			if, for any substitution $\si$, well-formed local context $\De$ and term $T$,
			$\typg{\De}{\si(u)}{T}$ implies $\typg{\De}{\si(v)}{T}$.
		\item A rewrite rule is \emph{permanently well-typed} for a global context $\Ga$
            if it is well-typed
			for any extension $\Ga_0\supset\Ga$ that satisfies product compatibility.
			We write $\typrr{u}{v}$ when $(u\rw v)$ is permanently well-typed in $\Ga$.
	\end{itemize}
\end{definition}
The notion of permanently well-typed rewrite rule makes possible to typecheck rewrite rules
only once and not each time we make new declarations or add other rewrite rules in the context.

We can now give the typing rules for global contexts.

\begin{figure}
	\begin{mdframed}
		\centering
		\begin{tabular}{lc}
			(\bf Empty Global Context) &
			\AxC{}
			\UIC{$\wf{\emptyset}$}
			\DP{}
			\\[0.5cm]

			(\bf Object Declaration) &
			\AxC{$\wf{\Ga}$}
			\AxC{$\typg{\emptyset}{U}{\Type}$}
			\AxC{$\ud{c}\notin dom(\Ga)$}
			\TIC{$\wf{\Ga(\ud{c}:U)}$}
			\DP{}
			\\[0.5cm]

			(\bf Type Declaration) &
			\AxC{$\wf{\Ga}$}
			\AxC{$\typg{\emptyset}{K}{\Kind}$}
			\AxC{${\bf PC}(\Ga(C:K))$}
			\AxC{$C\notin dom(\Ga)$}
			\QIC{$\wf{\Ga(C:K)}$}
			\DP{}
			\\[0.5cm]

			(\bf Rewrite Rules) &
			\AxC{$\wf{\Ga}$}
			\AxC{$(\forall i)\typrr{u_i}{v_i}$}
			\AxC{${\bf PC}(\Ga(u_1\rw v_1)\ldots(u_n\rw v_n))$}
			\TIC{$\wf{\Ga(u_1\rw v_1)\ldots(u_n\rw v_n)}$}
			\DP{}
		\end{tabular}
		\caption{Typing rules for global contexts}\label{fig:typing-rules-for-global-contexts}
	\end{mdframed}
\end{figure}

\begin{definition}[Well-formed Global Context]\label{def:well-formed-global-context}
	A global context is \emph{well-formed} if the judgment $\wf{\Ga}$ is derivable
	by the inference rules of \autoref{fig:typing-rules-for-global-contexts}.
\end{definition}

The rules {\bf(Object Declaration)} and {\bf(Type Declaration)} ensure that constant declarations are
well-typed.
One can remark that the premise ${\bf PC}(\Ga(\ud{c}:U))$ is \emph{missing} in the
{\bf(Object Declaration)} rule.
This is because {\bf PC}($\Ga(\ud{c}:U)$) can be proved from {\bf PC}($\Ga$);
to prove product compatibility for $\Ga(c:U)$ it suffices to emulate the constant $c$ by a fresh variable
and use the product compatibility property of $\Ga$.
This cannot be done for type declarations since type-level variables do not exist in the \lpm{}.
The rule {\bf(Rewrite Rules)} permits adding rewrite rules.
Notice that we can add several rewrite rules at once.
In this case, only product compatibility for the whole system is required.
On the other hand, when a rewrite rule is added it needs to be well-typed independently
from the other rules that are added at the same time.

Well-formed global contexts satisfy subject reduction and uniqueness of types.
Proofs can be found in the long version of this paper at the author's webpage.

\begin{theorem}[Subject Reduction]\label{thm:SubjectReduction}
	Let $\Ga$ be a well-formed global context.
    Subject reduction holds for $\redbg$ in $\Ga$.
\end{theorem}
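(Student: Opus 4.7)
The plan is to prove subject reduction by induction on the reduction step $t_1 \redbg t_2$, with the typing derivation $\typg{\De}{t_1}{T}$ as an auxiliary parameter inducted on in each case. Before starting, I would establish the standard toolkit of preservation lemmas that the \lpm{} shares with the pure $\la\Pi$-calculus: weakening (adding declarations to $\De$ or $\Ga$ preserves typing), the substitution lemma (if $\typg{\De(x:A)\De'}{t}{T}$ and $\typg{\De}{u}{A}$, then $\typg{\De\si(\De')}{t[x/u]}{T[x/u]}$), and inversion lemmas extracting the shape of the premises from a derivation whose conclusion has a given syntactic form. The inversion lemmas must be stated \emph{up to $\eqbg$-conversion}, because the \textbf{(Conversion)} rule can be applied at any point.

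The case analysis then splits into three groups. First, the contextual/congruence cases (reduction inside an application, abstraction, product, or subterm of a type) follow by the induction hypothesis combined with \textbf{(Conversion)} whenever a type annotation changes: for example, if $A \redbg A'$ inside $\la x:A.t$, I rebuild the derivation using $A \eqbg A'$ to retype $t$ in the context $\De(x:A')$. Second, for head $\Ga$-reduction $\si(u) \redg \si(v)$ with $(u \rw v) \in \Ga$, I invoke the fact that well-formedness of $\Ga$ guarantees each rewrite rule is permanently well-typed and that $\Ga$ itself satisfies \textbf{PC}, so $\typrr{u}{v}$ immediately yields $\typg{\De}{\si(v)}{T}$. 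Third, the critical case is head $\be$-reduction $(\la x:A.u)\,v \redb u[x/v]$. Here I invert the typing of the application to obtain $\typg{\De}{\la x:A.u}{\Pi x:A'.B'}$ and $\typg{\De}{v}{A'}$ with $T \eqbg B'[x/v]$, then invert the abstraction to get $\typg{\De(x:A)}{u}{B}$ together with $\Pi x:A.B \eqbg \Pi x:A'.B'$. Applying product compatibility splits this into $A \eqbg A'$ and $B \eqbg B'$, which lets me retype $v$ at type $A$, apply the substitution lemma to get $\typg{\De}{u[x/v]}{B[x/v]}$, and conclude by \textbf{(Conversion)} since $B[x/v] \eqbg B'[x/v] \eqbg T$.

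The main obstacle is exactly this $\be$-step: without product compatibility, the equality $\Pi x:A.B \eqbg \Pi x:A'.B'$ on the two product types obtained by inversion does \emph{not} a priori yield equalities on the components, and the substitution step collapses. This is why \textbf{PC}$(\Ga)$ is baked into the well-formedness rules for global contexts and why the theorem is stated for well-formed $\Ga$; the whole machinery of \autoref{fig:typing-rules-for-global-contexts} is engineered to make this one invocation of product compatibility legal. Everything else is bookkeeping with the induction hypothesis and conversion.
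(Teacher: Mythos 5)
Your proposal is correct and is essentially the argument the paper's machinery is built for: the paper itself defers the proof to its long version, but the definitions of product compatibility, permanently well-typed rewrite rules, and the well-formedness rules for global contexts exist precisely to discharge the three cases you identify (congruence cases by induction hypothesis plus \textbf{(Conversion)}, head $\Ga$-steps by permanent well-typedness, head $\be$-steps by inversion, \textbf{PC}$(\Ga)$, and the substitution lemma). The only points left implicit in your sketch are routine: a type-correctness lemma (every type assigned by a derivation is $\Kind$ or itself typable by a sort) is needed to legitimize the \textbf{(Conversion)} applications, and the local contexts arising under binders must be checked well-formed so that the definition of well-typed rewrite rule applies there.
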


\begin{theorem}[Uniqueness of Types]\label{thm:UniquenessOfType}
	Let $\Ga$ be a well-formed global context
	and let $\De$ be a local context well-formed for $\Ga$.
	If $\typg{\De}{t}{T_1}$ and $\typg{\De}{t}{T_2}$
	then $T_1 \eqbg T_2$.
\end{theorem}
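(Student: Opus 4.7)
The plan is to establish Uniqueness of Types by structural induction on $t$, after first proving a generation (inversion) lemma for each syntactic form. Because the rule \textbf{(Conversion)} can be inserted anywhere in a derivation, the typing rules are not syntax-directed, so a direct induction on the derivation would be awkward. I would therefore first prove, by induction on the derivation, generation lemmas of the following shape: if $\typg{\De}{t_0\,u_0}{T}$ then there exist $A,B$ with $\typg{\De}{t_0}{\Pi x{:}A.B}$, $\typg{\De}{u_0}{A}$ and $T\eqbg B[x/u_0]$; and analogous statements for variables, constants, abstractions, products and the sort $\Type$. In each of these inductions the only non-trivial case is \textbf{(Conversion)}, which is absorbed using transitivity of $\eqbg$.

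With the generation lemmas in hand, I would proceed by induction on the structure of $t$. For variables and constants, the declared type in $\De$ or $\Ga$ is unique, so both $T_1$ and $T_2$ are $\eqbg$-equivalent to it. For $t=\Type$ and for $t=\Pi x{:}A.B$, generation shows $T_1$ and $T_2$ are each $\eqbg$-equivalent to a sort; since sorts are only convertible to themselves (as noted in \autoref{ssec:lpm-rewriting}), they must coincide. For $t=\la x{:}A.s$, generation yields $T_i\eqbg \Pi x{:}A.B_i$; the induction hypothesis applied to $s$ in the extended local context $\De(x{:}A)$ gives $B_1\eqbg B_2$, whence $T_1\eqbg \Pi x{:}A.B_1\eqbg \Pi x{:}A.B_2\eqbg T_2$ by congruence.

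The main obstacle is the application case $t=t_0\,u_0$, which is precisely where product compatibility is required. Generation supplies $A_i,B_i$ with $\typg{\De}{t_0}{\Pi x{:}A_i.B_i}$ and $T_i\eqbg B_i[x/u_0]$. The induction hypothesis applied to $t_0$ gives $\Pi x{:}A_1.B_1\eqbg \Pi x{:}A_2.B_2$. Since $\Ga$ is well-formed, $\textbf{PC}(\Ga)$ holds: it is imposed as a side condition by the \textbf{(Type Declaration)} and \textbf{(Rewrite Rules)} formation rules, and is preserved by \textbf{(Object Declaration)} via the variable-emulation argument mentioned in the text, so a simple induction on the formation of $\Ga$ delivers it. Product compatibility then yields $B_1\eqbg B_2$, and because substitution on untyped terms commutes with $\redbg$ it also preserves $\eqbg$, so $T_1\eqbg B_1[x/u_0]\eqbg B_2[x/u_0]\eqbg T_2$, completing the induction.
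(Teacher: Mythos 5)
Your overall strategy --- generation (inversion) lemmas to recover syntax-direction, then structural induction on $t$, with product compatibility discharging the application case --- is exactly the intended one: the paper states that it proves uniqueness of types ``under the assumption of product compatibility'' and relegates the details to its long version, and your observation that $\textbf{PC}(\Ga)$ follows from well-formedness of $\Ga$ by induction on its formation (using the variable-emulation argument for \textbf{(Object Declaration)} and the explicit $\textbf{PC}$ premises of the other rules) is correct.

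There is one concrete step you gloss over. Product compatibility, as defined in this paper, is \emph{not} unrestricted injectivity of $\Pi$ up to $\eqbg$: its hypothesis requires $\Pi x{:}A_1.B_1$ and $\Pi x{:}A_2.B_2$ to be \emph{well-typed} product types in the same well-formed local context. In your application case you obtain $\typg{\De}{t_0}{\Pi x{:}A_i.B_i}$ from generation and $\Pi x{:}A_1.B_1 \eqbg \Pi x{:}A_2.B_2$ from the induction hypothesis, but you never establish that these two products are themselves well-typed in $\De$, so as written you cannot yet invoke $\textbf{PC}$. Closing this requires a type-validity (correctness-of-types) lemma --- if $\typctx{\De}$ and $\typg{\De}{t}{T}$ then $T=\Kind$ or $\typg{\De}{T}{s}$ for some sort $s$ --- whose application case in turn needs weakening and a substitution lemma. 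None of this is hard, but it is genuinely needed and should appear in the proof. Two smaller bookkeeping points of the same kind: to apply the induction hypothesis in the abstraction case you need $\De(x{:}A)$ to be well-formed, which you get from the premise $\typg{\De}{A}{\Type}$ supplied by generation together with freshness of $x$; and the variable and constant cases rely on well-formedness of $\De$ and $\Ga$ to guarantee that declarations are unique. With these standard lemmas added, your proof goes through.
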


Remark that strong normalization of well-typed terms for the relations $\redg$ and $\redb$
is not guaranteed.

\subsection{Criteria for Product Compatibility and Well-typedness of Rewrite Rules}\label{ssec:criteria}

We now give effective criteria for checking product compatibility and well-typedness of rewrite rules.

The usual way to prove product compatibility is by showing the confluence of the rewrite system.
\begin{theorem}[Product Compatibility from Confluence]\label{thm:product-compat-by-confluence}
	Let $\Ga$ be a global context.
	If $\redbg$ is confluent then product compatibility holds for $\Ga$.
\end{theorem}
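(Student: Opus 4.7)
The plan is a standard confluence-to-compatibility argument: move from semantic equivalence to common reducts, then exploit that the head $\Pi$ is stable under reduction.

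First I would unfold the hypothesis: take two products $\Pi x{:}A_1.B_1$ and $\Pi x{:}A_2.B_2$ that are well-typed in the same well-formed local context and satisfy $\Pi x{:}A_1.B_1 \eqbg \Pi x{:}A_2.B_2$. Because $\eqbg$ is the congruence generated by $\redbg$, confluence of $\redbg$ gives the Church--Rosser property, so there exists a term $T$ with $\Pi x{:}A_1.B_1 \redbg^* T$ and $\Pi x{:}A_2.B_2 \redbg^* T$.

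The key step, which I would isolate as a small lemma, is the following stability property: if $\Pi x{:}A.B \redbg^* T$, then $T$ is itself a product $\Pi x{:}A'.B'$ with $A \redbg^* A'$ and $B \redbg^* B'$. This goes by induction on the length of the reduction, so it suffices to analyze a single step $\Pi x{:}A.B \redbg T$. A $\be$-step cannot fire at the head, since $\Pi$ is not a $\la$-application and $\be$ only contracts redexes of the form $(\la y{:}U.u)\,v$; hence any $\be$-step is internal to $A$ or $B$. For a $\Ga$-step, the rewrite rules of $\Ga$ have as left-hand sides either objects or types headed by a user-declared (type) constant applied to arguments (the intended algebraic shape, on which the notion of matching in the \lpm{} is built), so no rule can match at the root of a $\Pi$; once again the reduction takes place inside $A$ or $B$. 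In either case $T = \Pi x{:}A'.B'$ with the required component reductions, and the inductive step follows immediately.

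Applying this lemma to both reductions, I conclude that $T = \Pi x{:}A'.B'$ with $A_1 \redbg^* A'$, $A_2 \redbg^* A'$, $B_1 \redbg^* B'$, and $B_2 \redbg^* B'$. Hence $A_1 \eqbg A_2$ via $A'$ and $B_1 \eqbg B_2$ via $B'$, which is exactly product compatibility.

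The only genuinely delicate point is the stability lemma, and specifically the claim that no rewrite rule in $\Ga$ can match at the head of a $\Pi$. Given the paper's setup — where the $\Ga$-reduction is introduced as rewriting by user-declared rules whose left-hand sides are built from declared constants (the algebraic shape made explicit in the definition of algebraic terms and in the HRS encoding to come) — this is essentially by construction, but I would make the argument explicit to avoid relying on convention. Everything else is a routine inductive chase on reduction sequences.
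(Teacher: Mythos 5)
Your proof is correct and takes essentially the same route as the paper: the proof of this particular theorem is deferred to the long version, but the paper's proof of the analogous \autoref{thm:product-compat-by-HO-confluence} is exactly your argument --- pass from $\eqbg$ to a common $\redbg^*$-reduct by confluence and use (there left implicit) the stability fact that a product only reduces to a product with reduced components. You have also correctly isolated the one delicate point, namely that no $\Ga$-rule can contract a redex at the head of a $\Pi$ (object-level left-hand sides cannot instantiate to a product, and the intended constant-headed type-level left-hand sides cannot either), and making that explicit as you propose is precisely what the omitted proof must do.
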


One could think that we can weaken the assumption of confluence requiring only
confluence for well-typed terms.
This is not a viable option since, without product compatibility,
we do not know if reduction preserves typing (subject reduction)
and if the set of well-typed terms is closed by reduction.
Therefore, it seems unlikely to be able to prove confluence only for
well-typed terms before proving the product compatibility property.

The confluence of $\redbg$ can be obtained from the confluence of $\redg$.
\begin{theorem}[M\"uller~\cite{Muller92}]\label{thm:Muller}
	If $\redg$ is left-algebraic, left-linear and confluent, then $\redbg$ is confluent.
\end{theorem}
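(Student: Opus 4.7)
The plan is to invoke the Hindley–Rosen lemma: since $\redb$ is confluent (by the classical Church–Rosser theorem for $\beta$, which uses only that $\beta$-reduction is defined on raw terms without typing constraints, as remarked in \autoref{ssec:lpm-rewriting}) and $\redg$ is confluent by hypothesis, it suffices to show that the relations $\redb$ and $\redg$ \emph{commute}. Together with Hindley–Rosen this gives confluence of $\redb\cup\redg=\redbg$.

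To prove commutation, I would use a parallel-reduction argument in the Tait–Martin-L\"of style. Define a parallel $\beta$-reduction $\Rightarrow_\be$ that contracts any set of $\beta$-redexes simultaneously, and an analogous parallel $\Gamma$-reduction $\Rightarrow_\Ga$ that contracts any set of non-overlapping $\Gamma$-redexes (these are independent because $\redg$ is already confluent). Both satisfy $\redb \subseteq \Rightarrow_\be \subseteq \redb^*$ and $\redg \subseteq \Rightarrow_\Ga \subseteq \redg^*$, so the commutation of $\redb$ and $\redg$ reduces to the diamond property
\[
t\Rightarrow_\be t_1 \ \text{ and }\ t\Rightarrow_\Ga t_2 \quad\Longrightarrow\quad \exists t_3.\ t_1\Rightarrow_\Ga t_3 \ \text{ and }\ t_2\Rightarrow_\be t_3.
\]
This is proved by induction on $t$, analysing how the contracted $\beta$- and $\Gamma$-redexes can interact.

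The critical cases are the ones where a $\Gamma$-redex $u\rw v$ and a $\beta$-redex overlap. Here the two syntactic restrictions on $\redg$ pay off. \emph{Left-algebraic}: the left-hand side $u$ contains no $\lambda$-abstraction, so a $\beta$-redex can never be matched against the rigid pattern structure of $u$; it must lie \emph{below} a pattern variable of $u$. \emph{Left-linear}: each pattern variable of $u$ occurs at most once in $u$, so the substitution terms matched by these variables are not duplicated when forming $v$ — hence any $\beta$-redex lying in such a matched subterm occurs in the result of the $\Gamma$-step with the same multiplicity as in $t$, and can be fired by a single $\Rightarrow_\be$ step to close the diamond. Symmetrically, $\Gamma$-redexes sitting below the bound variable of a $\lambda$, or in the argument, of a $\beta$-redex are replicated or substituted in a controlled way by a $\beta$-step, and the resulting $\Gamma$-redexes can all be contracted in parallel because $\redg$ is confluent (so the descendants form a compatible parallel step).

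The main obstacle I expect is the bookkeeping of residuals: one must show that after a parallel $\Gamma$-step, the set of descendants of the original $\beta$-redexes is still a set of $\beta$-redexes (this is immediate, since a $\Gamma$-step only affects algebraic positions), and conversely that after a parallel $\beta$-step, the original $\Gamma$-redexes have well-defined descendants which can all be contracted at once without creating new overlaps. Left-linearity is exactly what prevents the pathological duplication that would break the parallel $\Gamma$-step, and left-algebraicity is what prevents $\beta$-reduction from creating or destroying $\Gamma$-patterns in an uncontrolled way. Once the diamond is established, Hindley–Rosen concludes the proof; this is the argument originally given by M\"uller~\cite{Muller92}, so I would largely refer to that reference rather than reprove it in detail.
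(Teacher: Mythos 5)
The paper does not prove this statement: it is imported verbatim from M\"uller~\cite{Muller92} with only a citation, so there is no in-paper proof to compare against. Your sketch follows what is essentially M\"uller's own route --- Hindley--Rosen applied to $\redb$ and $\redg$, with commutation established through parallel (Tait--Martin-L\"of style) versions of the two relations --- so in that sense you are reconstructing the standard argument rather than diverging from the source, and the overall architecture is sound: left-algebraicity guarantees that a $\be$-redex can never overlap the rigid part of a left-hand side (an algebraic pattern never places a variable in head position of an application, so no instance of it can expose a $\la$ under an application node), and parallel steps absorb the duplication caused by $\be$-substitution on one side and by non-linear right-hand sides on the other.

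One point in your justification is misattributed, though it does not break the proof. You claim left-linearity matters because ``the substitution terms matched by these variables are not duplicated when forming $v$'' --- but left-linearity constrains only the left-hand side; the right-hand side $v$ may well use a variable several times and duplicate the matched subterm, and that duplication is harmless precisely because the closing step is a \emph{parallel} $\be$-step. The actual role of left-linearity is different: it ensures that contracting a $\be$-redex \emph{inside} one matched argument of a $\Ga$-redex does not destroy that redex. With a non-linear rule such as $f\,y\,y \rw \ldots$, the term $f\,a\,a$ with $a \redb a'$ reduces to $f\,a'\,a$, which no longer matches, and commutation fails. You should also be aware that confluence of $\redg$ is needed only for the Hindley--Rosen step (each relation individually confluent plus commutation), not, as you suggest, to make ``descendants form a compatible parallel step''; the commutation diagram itself goes through for any left-linear, left-algebraic $\Ga$. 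Since you ultimately defer to~\cite{Muller92} for the details, these are defects of explanation rather than gaps in the proof.
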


To show that a rewrite rule is well-typed, one can use the following result:
\begin{theorem}\label{thm:strongly-well-typed}
	Let $\Ga$ be a well-formed global context and $(u\rw v)$ be a rewrite rule.
	If $u$ is algebraic and there exist $\De$ and $T$ such that
	$\typctx{\De}$, $dom(\De) = FV(u)$, $\typg{\De}{u}{T}$ and $\typg{\De}{v}{T}$
	then $(u\rw v)$ is permanently well-typed for $\Ga$.
\end{theorem}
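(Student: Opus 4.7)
The plan is to fix an arbitrary extension $\Ga_0 \supset \Ga$ satisfying ${\bf PC}(\Ga_0)$ and verify the well-typedness condition directly. Given any substitution $\sigma$, well-formed local context $\De'$ and term $T'$ with $\typ{\Ga_0}{\De'}{\sigma(u)}{T'}$, the goal is to derive $\typ{\Ga_0}{\De'}{\sigma(v)}{T'}$. By a weakening lemma on the global context, the hypotheses $\typ{\Ga}{\De}{u}{T}$ and $\typ{\Ga}{\De}{v}{T}$ lift to $\typ{\Ga_0}{\De}{u}{T}$ and $\typ{\Ga_0}{\De}{v}{T}$, so I may work entirely in $\Ga_0$ from this point on.

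The heart of the argument is to show that $\sigma$ is a well-typed substitution from $\De$ into $\De'$: for every $x \in FV(u)$ with $(x{:}A_x) \in \De$ one must have $\typ{\Ga_0}{\De'}{\sigma(x)}{\sigma(A_x)}$, and moreover $T' \eqbg \sigma(T)$ (the congruence $\eqbg$ now computed relative to $\Ga_0$). Since $u$ is algebraic, it has the shape $c\,w_1\cdots w_n$ with $c$ a constant and each $w_i$ either a variable or an algebraic term with a constant head. I would establish both statements simultaneously by induction on this structure, inverting the typing derivation of each prefix $\sigma(c\,w_1\cdots w_i)$ one application at a time. After collapsing any intervening {\bf(Conversion)} steps, the head receives some product type $\Pi y{:}A'.B'$ and the new argument receives type $A'$; comparing $\Pi y{:}A'.B'$ with the substituted product type $\Pi y{:}\sigma(A).\sigma(B)$ derived by a substitution lemma from the original typing of $c\,w_1\cdots w_i$ in $\De$, product compatibility in $\Ga_0$ yields $A' \eqbg \sigma(A)$ and $B' \eqbg \sigma(B)$, which is exactly what is needed to propagate the induction. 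When $w_i$ is a variable $x$, the argument-typing clause delivers the required $\typ{\Ga_0}{\De'}{\sigma(x)}{\sigma(A_x)}$ via {\bf(Conversion)}; when $w_i$ is a nested algebraic subterm, the induction hypothesis applies to it.

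Once $\sigma$ is known to be well-typed on $FV(u) = dom(\De)$, a standard substitution lemma applied to $\typ{\Ga_0}{\De}{v}{T}$ delivers $\typ{\Ga_0}{\De'}{\sigma(v)}{\sigma(T)}$. Combined with $T' \eqbg \sigma(T)$ and the fact that $T'$ is a well-formed type or kind in $\De'$ (read off from the derivation of $\typ{\Ga_0}{\De'}{\sigma(u)}{T'}$), one final application of {\bf(Conversion)} yields $\typ{\Ga_0}{\De'}{\sigma(v)}{T'}$ as required.

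The main obstacle is the inversion step: in a derivation of $\typ{\Ga_0}{\De'}{\sigma(u)}{T'}$, {\bf(Conversion)} rules can be inserted arbitrarily, so the product type assigned to each head subterm and the type assigned to each argument are only determined up to $\eqbg$. The critical use of ${\bf PC}(\Ga_0)$ — and the reason the theorem is stated for \emph{permanently} well-typed rules — is precisely to recover injectivity of $\Pi$ under $\eqbg$ so that domains and codomains can be matched at every application node. A secondary subtlety is that $T'$ may itself be a kind (when $u$ is a type), but since kinds are convertible only to kinds, the same scheme goes through uniformly.
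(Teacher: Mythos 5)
Your proposal is correct and matches the paper's own argument: the paper obtains this theorem as the algebraic special case of \autoref{thm:strong-well-typed-pattern}, proved via \autoref{lem:MainLemma}, which is exactly your key step --- an induction on the structure of the left-hand side that inverts each application node and invokes product compatibility of the extended context to match domains and codomains, concluding that each $\si(x)$ receives a type convertible to $\si(\De(x))$ and that the type of $\si(u)$ is convertible to $\si(T)$. The conclusion then follows, as in your last paragraph, by a substitution lemma applied to the typing of $v$ together with one application of \textbf{(Conversion)}.
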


\subsection{Example}
\label{ssec:lpm-example}

As an example, we define the map function on lists of integers.
We first define the type of \emph{Peano integers} by the three successive global declarations:\\[0.2cm]
$\Nat~:~\Type$.\\
$0~:~\Nat$.\\
$\Succ~:~\Nat \ar \Nat$.\\
For readability, we will write $n$ instead of $\overbrace{S~(S~\ldots (S}^{n~times}~0))$.
We now define a type for lists:\\[0.2cm]
$\List~:~\Type$.\\
$\Nil~:~\List$.\\
$\Cons~:~\Nat~\ar~\List~\ar~\List$.\\[0.2cm]
and the function map on lists:\\[0.2cm]
$\Map~:~(\Nat~\ar~\Nat)~\ar~\List~\ar~\List$.\\
$\Map~f~\Nil~\rw~\Nil$.\\
$\Map~f~(\Cons~hd~tl)~\rw~\Cons~(f~hd)~(\Map~f~tl)$.\\[0.2cm]
For instance, we can use this function to add some value to the elements of a list.
First, we define addition:\\[0.2cm]
$\Plus~:~\Nat \ar \Nat \ar \Nat$.\\
$\Plus~0~n~\rw~n$.\\
$\Plus~(\Succ~n_1)~n_2~\rw~\Succ~(\Plus~n_1~n_2)$.\\[0.2cm]
Then, we have the following reduction:\\[0.2cm]
$\Map~(\Plus~3)~(\Cons~1~(\Cons~2~(\Cons~3~\Nil))) \redg^* \Cons~4~(\Cons~5~(\Cons~6~\Nil))$.\\

This global context is well-formed.
Indeed, one can check that each global declaration is well-typed.
Moreover, each time we add a rewrite rule,
it verifies the hypotheses of~\autoref{thm:strongly-well-typed}
and it preserves the confluence of the relation $\redbg$.
Therefore, the rewrite rules are permanently well-typed and, by~\autoref{thm:product-compat-by-confluence},
product compatibility is always guaranteed.

\section{A Naive Definition of Rewriting Modulo \texorpdfstring{$\be$}{}}
\label{sec:naive-beta-rewriting}

As already mentioned, our goal is to give a notion of rewriting modulo $\be$ in the setting of \lpm{}.
We first exhibit the issues arising from a naive definition of this notion.

In an untyped setting, we could define rewriting modulo $\be$ in this manner:
$t_1$ rewrites to $t_2$ if, for some rewrite rule $(u \rw v)$ and substitution $\si$,
$\si(u) \eqb t_1$ and $\si(v) \eqb t_2$.
This definition is not satisfactory for several reasons.
\paragraph{It breaks subject reduction.}
For the rewrite rule of \autoref{sec:introduction},
taking $\si=\{f\mapsto \la y:\Omega.y\}$ where $\Omega$ is some ill-typed term, we have\\[-0.3cm]
$$ \Diff~(\la x:R.\Exp~x) \ar \Fmult~(\Diff~(\la x:R.(\la y:\Omega.y)~x)~(\la x:R.\Exp~((\la y:\Omega.y)~x)))$$
and, even if $\Diff~(\la x:R.\Exp~x)$ is well-typed, its reduct is ill-typed since it contains
an ill-typed subterm.
\paragraph{It may introduce free variables.}
In the example above, $\Omega$ has no reason to be closed.
\paragraph{It does not provide confluence.}
If we consider the following variant of the rewrite rule
$$ \Diff~(\la x:R.\Exp~(f~x)) \rw \Fmult~(\Diff~f)~(\la x:R.\Exp~(f~x)) $$
and take $\si_1 = \{ f \mapsto \la y:A_1.y \}$ and $\si_2 = \{ f \mapsto \la y:A_2.y \}$
where $A_1$ and $A_2$ are two non convertible types then we have:

\begin{tikzpicture}
	\node (n1) at (4,2) {$\Diff~(\la x:R.\Exp~((\la y:R.y)~x))$};
	\node (n2) at (0,0) {$\Fmult~(\Diff~(\la y:A_1.y))~(\la x:R.(\Exp~((\la y:A_1.y)~x)))$};
	\node (n3) at (9,0) {$\Fmult~(\Diff~(\la y:A_2.y))~(\la x:R.(\Exp~((\la y:A_2.y)~x)))$};

	\draw[->] (n1) -- (n2) node[pos=0.7,left]{$\Diff^{\si_1}$};
	\draw[->] (n1) -- (n3) node[pos=0.7,right]{$\begin{array}{l}\Diff^{\si_2}\\ \end{array}$};
\end{tikzpicture}
and the peak is not joinable.

Therefore, we need to find a definition that takes care of these issues.
We will achieve this using an embedding of \lpm{} into Higher-Order Rewrite Systems.

\section{Higher-Order Rewrite Systems}
\label{sec:higher-order-rewrite-systems}

In 1991, Nipkow~\cite{Nipkow-LICS-91} introduced Higher-Order Rewrite Systems (HRS)
in order to lift termination and confluence results from first-order rewriting
to rewriting over $\la$-terms. More generally, the goal was to study rewriting over terms
with bound variables such as programs, theorem and proofs.

Unlike the \lpm{}, in HRSs $\be$-reduction and rewriting do not operate at
the same level. Rewriting is defined as a relation between the $\be\eta$-equivalence
classes of simply typed $\la$-terms: the $\la$-calculus is used as a meta-language.

Higher-Order Rewrite Systems are based upon the (pre)terms of
the simply-typed $\la$-calculus built from a signature.
A signature is a set of base types $\Base$ and a set of typed constants.
A simple type is either a base type $b\in\Base$ or an arrow $A \ar B$
where $A$ and $B$ are simple types.

\begin{definition}[Preterm]\label{def:preterm}
	A \emph{preterm} of type $A$ is
	\begin{itemize}
		\item either a \emph{variable} $x$ of type $A$
			(we assume given for each simple type $A$ an infinite number of variables of this type),
		\item or a \emph{constant} $f$ of type $A$,
		\item or an \emph{application} $t(u)$ where
			$t$ is a preterm of type $B\ar A$ and
			$u$ is a preterm of type $B$,
		\item or, if $A=B\ar C$, an \emph{abstraction} $\ula x.t$ where
			$x$ is a variable of type $B$ and
			$t$ is a preterm of type $C$.
	\end{itemize}
\end{definition}

In order to distinguish the abstraction of HRSs from the abstraction of \lpm,
we use the underlined symbol $\ula$ instead of $\la$.
Similarly, we write the application $t(u)$ for HRSs (instead of $tu$).
We use the abbreviation $t(u_1,\ldots,u_n)$ for $t(u_1)\ldots(u_n)$.
If $A$ is a simple type, we write $A^1$ for $A$ and $A^{n+1}$ for $A \ar A^n$.

Notice also that HRSs abstractions do not have type annotations because variables are typed.

$\be$-reduction and $\eta$-expansion are defined as usual on preterms.
We write $\bel t$ for the long $\be\eta$-normal form of $t$.

\begin{definition}[Term]
	A \emph{term} is a preterm in long $\be\eta$-normal form.
\end{definition}

\begin{definition}[Pattern]\label{def:pattern}
	A term $t$ is a \emph{pattern} if every free occurrence of a variable $F$ is in
	a subterm of $t$ of the form $F\vec{u}$ such that $\vec{u}$ is $\eta$-equivalent to a list
	of distinct bound variables.
\end{definition}

The crucial result about patterns (due to Miller~\cite{Miller91alogic})
is the decidability of higher-order unification
(unification modulo $\be\eta$) of patterns.
Moreover, if two patterns are unifiable then a most general unifier exists and is computable.

The notion of rewrite rule for HRSs is the following:

\begin{definition}[Rewrite Rules]\label{def:HRS-rewrite-rule}
	A \emph{rewrite rule} is a pair of terms $(l \rw r)$ such that
	$l$ is a pattern not $\eta$-equivalent to a variable,
	$FV(r) \subset FV(l)$
	and $l$ and $r$ have the same base type.
\end{definition}

The restriction to patterns for the left-hand side ensures that matching is decidable but also that,
when it exists, the resulting substitution is unique.
This way, the situation is very close to first-order (\emph{i.e.} syntactic) matching.

\begin{definition}[Higher-Order Rewriting System (HRS)]\label{def:HRS}
	A \emph{Higher-Order Rewriting System} is a set $R$ of rewrite rules.

	The rewrite relation $\red_R$ is the smallest relation on terms closed by subterm rewriting
	such that, for any $(l \rw r)\in R$ and any well-typed substitution $\si$,
	$\bel \si(l) \red_R \bel \si(r)$.
\end{definition}

The standard example of an HRS is the untyped $\la$-calculus.
The signature involves a single base type \hrsTerm{} and two constants:
$$ \hrsLam : (\hrsTerm \ar \hrsTerm) \ar \hrsTerm $$
$$ \hrsApp : \hrsTerm \ar \hrsTerm \ar \hrsTerm $$
and a single rewrite rule for $\be$-reduction:
$$ (beta)\quad\hrsApp(\hrsLam(\ula x.X(x)),Y) \rw X(Y) $$

\section{An Encoding of the \texorpdfstring{\lpm{}}{} into Higher-Order Rewrite Systems}
\label{sec:encoding}

\subsection{Encoding of Terms}

We now mimic the encoding of the untyped $\la$-calculus as an HRS and encode the terms of the \lpm{}.
First we specify the signature.

\begin{definition}\label{def:SignatureLpm}
	The signature $\Sig(\la\Pi)$ is composed of
	a single base type $\hrsTerm$,
	the constants \hrsType{} and \hrsKind{} of atomic type \hrsTerm{},
	the constant  \hrsApp{} of type \hrsTerm{} $\ar$ \hrsTerm{} $\ar$ \hrsTerm{},
	the constants \hrsLam{} and \hrsPi{}
	of type \hrsTerm{} $\ar$ (\hrsTerm{} $\ar$ \hrsTerm{}) $\ar$ \hrsTerm{}
	and the constants {\tt c} of type \hrsTerm{}
		for every constant $c\in\OCst\cup\TCst$.
\end{definition}

Then we define the encoding of $\la\Pi$-terms.

\begin{definition}[Encoding of $\la\Pi$-term]\label{def:encoding-terms}
	The function $\emb{.}$ from $\la\Pi$-terms to HRS-terms in the signature $\Sig(\la\Pi)$
	is defined as follows:

		\begin{tabular}{lcllcl}
			$\emb{\Kind}$		&:= & \hrsKind &
			$\emb{\Type}$ 		&:= & \hrsType\\
			$\emb{x}$			&:= & $x$ (variable of type \hrsTerm) &
			$\emb{c}$			&:= & {\tt c} \\
			$\emb{uv}$			&:= & \hrsApp$(\emb{u},\emb{v})$ &
			$\emb{\la x:A.t}$ 	&:= & \hrsLam$(\emb{A},\ula x.\emb{t})$ \\
			$\emb{\Pi x:A.B}$ 	&:= & \hrsPi$(\emb{A},\ula x.\emb{B})$ & & &
		\end{tabular}
\end{definition}

\begin{lemma}\label{rmk:Bijection}
	The function $\emb{.}$ is a bijection from the $\la\Pi$-terms to HRS-terms of type \hrsTerm.
\end{lemma}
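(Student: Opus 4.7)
The plan is to establish well-definedness of $\emb{.}$ and then exhibit an explicit two-sided inverse, which I shall call $\phi$.

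First, I would check by induction on the $\la\Pi$-term $t$ that $\emb{t}$ is a preterm of type $\hrsTerm$ in the signature $\Sig(\la\Pi)$. Each clause of \autoref{def:encoding-terms} is well-typed by design: variables are mapped to HRS-variables of type $\hrsTerm$; sorts and constants from $\OCst \cup \TCst$ become constants of type $\hrsTerm$; and the applications of $\hrsApp$, $\hrsLam$, $\hrsPi$ respect their declared arities. I would then observe that $\emb{t}$ is already in long $\be\eta$-normal form, hence a genuine HRS-term: no $\be$-redex arises because every $\ula$-abstraction occurs strictly as an immediate argument of $\hrsLam$ or $\hrsPi$, never at the head of a $\hrsApp$; and no $\eta$-expansion is needed since every subterm has atomic type $\hrsTerm$.

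Next, I would define $\phi$ by structural induction on HRS-terms of type $\hrsTerm$, dispatching on the head: $\phi(\hrsType) = \Type$, $\phi(\hrsKind) = \Kind$, $\phi(\mathtt{c}) = c$ for $c \in \OCst \cup \TCst$, $\phi(x) = x$, $\phi(\hrsApp(t_1,t_2)) = \phi(t_1)\,\phi(t_2)$, $\phi(\hrsLam(t_1,\ula x.t_2)) = \la x{:}\phi(t_1).\phi(t_2)$, and $\phi(\hrsPi(t_1,\ula x.t_2)) = \Pi x{:}\phi(t_1).\phi(t_2)$. The two identities $\emb{\phi(t)} = t$ and $\phi(\emb{u}) = u$ then follow by routine structural inductions.

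The one point that deserves care is checking that this case analysis is exhaustive, i.e., that every HRS-term of type $\hrsTerm$ genuinely has one of the seven heads listed above. This rests on the characterization of long $\be\eta$-normal forms at atomic type as terms of the shape $h(u_1,\ldots,u_n)$ with $h$ a constant or a variable, combined with the fact that $\Sig(\la\Pi)$ contains exactly these heads typeable at $\hrsTerm$. I expect the main bookkeeping effort to be ensuring that $\phi$ always lands in the set of $\la\Pi$-terms described in \autoref{def:term}; but since the grammar of \autoref{fig:syntax-term} permits each of the combinations produced by $\phi$, this amounts to a direct inspection of each case.
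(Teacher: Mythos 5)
Your verification that $\emb{t}$ is a well-typed preterm in long $\be\eta$-normal form is fine, and your inverse $\phi$ does give injectivity of $\emb{.}$ together with a computable left inverse on its image. The gap is in the surjectivity half, at exactly the step you defer as ``direct inspection'': it is \emph{not} true that the grammar of \autoref{fig:syntax-term} permits every combination produced by $\phi$. The $\la\Pi$-terms are stratified into objects, types, kinds and $\Kind$, whereas the HRS side collapses all of these onto the single base type $\hrsTerm$. Concretely, $\hrsApp(\hrsKind,\hrsType)$ is a long $\be\eta$-normal HRS-term of type $\hrsTerm$, but $\phi$ sends it to $\Kind~\Type$, which is generated by no production of \autoref{fig:syntax-term} (applications must be object--object or type--object); likewise $\hrsPi(\hrsType,\ula x.\hrsType)$ would have to become $\Pi x:\Type.\Type$, whose domain is a kind rather than a type, and $\hrsLam(x,\ula y.y)$ would have to become $\la y:x.y$, whose annotation is an object variable rather than a type. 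A second, independent failure of exhaustiveness comes from variables: \autoref{def:preterm} provides infinitely many variables at \emph{every} simple type, so $F(u)$ with $F$ of type $\hrsTerm\ar\hrsTerm$ is a long $\be\eta$-normal form of type $\hrsTerm$ whose head falls under none of your seven cases; such terms are not an artefact, since the paper itself builds them in \autoref{def:mod:encoding-uniform-term}.

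So, read literally, the map is injective but not onto the HRS-terms of type $\hrsTerm$: it is a bijection onto a proper subset, and the statement needs its codomain cut down before your argument can close. The repair is to restrict to terms whose free variables have type $\hrsTerm$ and which are stratifiable, i.e.\ to define the image by a grammar on the HRS side mirroring \autoref{fig:syntax-term}, and to prove totality of $\phi$ on that set by a mutual induction following the object/type/kind stratification rather than a single induction on all terms of type $\hrsTerm$. It is worth noting that the rest of the paper only ever applies $\emb{.}^{-1}$ to reducts of encoded terms (as in \autoref{rmk:BetaBeta} and \autoref{lem:RedgbImpliesRedg}), for which injectivity plus closure of the image under the relevant reductions suffices; but that is a weaker claim than the bijection you set out to establish.
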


Note that this is a bijection between the untyped terms of the \lpm{} and well-typed terms of the corresponding HRS.
\subsection{Higher-Order Rewrite Rules}
\label{ssec:HO-rules}

We have faithfully encoded the terms. The next step is to encode the rewrite rules.
The following rule corresponds to $\be$-reduction at the HRS level:
$$ (beta)\quad\hrsApp(\hrsLam(X,\ula x.Y(x)),Z) \rw Y(Z) $$
We have the following correspondence:

\begin{lemma}\label{rmk:BetaBeta}\ 
	\begin{itemize}
		\item If $t_1 \redb t_2$ then $\emb{t_1} \red_{(beta)} \emb{t_2}$.
		\item If $t_1 \red_{(beta)} t_2$ and $t_1,t_2$ have type \hrsTerm{} then
			$\emb{t_1}^{-1} \redb \emb{t_2}^{-1}$ (where $\emb{.}^{-1}$ is the inverse of $\emb{.}$).
	\end{itemize}
\end{lemma}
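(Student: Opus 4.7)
The plan is to prove both directions by exploiting the fact that the pattern $\hrsApp(\hrsLam(X,\ula x.Y(x)),Z)$ on the left-hand side of $(beta)$ matches exactly the encodings of $\beta$-redexes, combined with a substitution lemma for the encoding.

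First I would establish the substitution lemma $\emb{t[x/u]} = \emb{t}[x/\emb{u}]$, where on the right $[x/\emb{u}]$ denotes capture-avoiding HRS substitution. The proof is a straightforward induction on $t$. A convenient point is that, in any encoded term, a $\lambda\Pi$-variable is mapped to an HRS variable of the base type $\hrsTerm$, hence can never appear in function position. Therefore substituting the HRS term $\emb{u}$ (also of type $\hrsTerm$) for such a variable in $\emb{t}$ creates no new $\beta$-redex and preserves long $\beta\eta$-normal form, so no normalization is needed on the right-hand side.

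For the first bullet I would proceed by induction on the derivation of $t_1 \redb t_2$. In the base case $t_1 = (\la x{:}A.t)\,u$ and $t_2 = t[x/u]$; by the encoding,
\[
\emb{t_1} = \hrsApp(\hrsLam(\emb{A},\ula x.\emb{t}),\emb{u}).
\]
Using Miller's pattern matching, this unifies with the left-hand side of $(beta)$ via the substitution $\sigma=\{X\mapsto\emb{A},\ Y\mapsto\ula x.\emb{t},\ Z\mapsto\emb{u}\}$ (the higher-order occurrence $Y(x)$ forces precisely $Y = \ula x.\emb{t}$). Then $\bel\sigma(Y(Z)) = \bel((\ula x.\emb{t})(\emb{u})) = \emb{t}[x/\emb{u}]$, which by the substitution lemma equals $\emb{t[x/u]} = \emb{t_2}$. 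The inductive cases follow because both $\redb$ and $\red_{(beta)}$ are closed under subterm rewriting and the encoding is compositional (preserving subterm positions of type $\hrsTerm$).

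For the second bullet I would argue symmetrically. A $(beta)$ step in $\emb{t_1}$ rewrites a subterm $s$ of type $\hrsTerm$ which must have the form $\hrsApp(\hrsLam(M,\ula x.N),P)$. By \autoref{rmk:Bijection}, $s = \emb{s'}$ for a unique $\la\Pi$-subterm $s'$, which, by injectivity of the encoding on each syntactic constructor, must be of the shape $(\la x{:}A.t)\,u$ with $\emb{A}=M$, $\emb{t}=N$ and $\emb{u}=P$. The HRS reduct of $s$ is $\bel(N[x/P]) = \emb{t}[x/\emb{u}] = \emb{t[x/u]}$ by the substitution lemma, and replacing $s'$ by $t[x/u]$ in $t_1$ is exactly a $\beta$-step to $t_2$. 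The bijection then gives $\emb{t_1}^{-1}\redb\emb{t_2}^{-1}$.

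The main obstacle I expect is ensuring the substitution lemma and pattern-unification step interact cleanly with the long $\beta\eta$-normal-form convention of HRSs. The argument relies crucially on the fact that all encoded $\lambda\Pi$-variables have the atomic type $\hrsTerm$, so no $\eta$-expansion is needed when substituting and no spurious $\beta$-redexes are created; once this is observed, both directions reduce to essentially syntactic bookkeeping.
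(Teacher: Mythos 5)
The paper states this lemma without proof, so there is nothing to compare against; your argument is correct and is the natural one: the substitution lemma $\emb{t[x/u]}=\emb{t}[x/\emb{u}]$, together with the observation that encoded $\la\Pi$-variables have the base type $\hrsTerm$ and hence never occur applied in the HRS sense (so substitution preserves long $\be\eta$-normal forms and the $(beta)$-redexes of an encoded term are exactly the images of $\be$-redexes), reduces both directions to syntactic bookkeeping. The only blemish is notational: in the second bullet $t_1$ and $t_2$ are already HRS terms, so you should speak of a $(beta)$-step in $t_1$ rather than in $\emb{t_1}$, and the bijection of \autoref{rmk:Bijection} should be read as ranging over HRS-terms of type $\hrsTerm$ whose free variables all have type $\hrsTerm$, which is what rules out a redex of the shape $\hrsApp(\hrsLam(M,\ula x.F(x)),P)$ with $F$ free of higher type.
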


By encoding rewrite rules in the obvious way (translating $(u\rw v)$ by $(\emb{u}\rw\emb{v})$), we would get
a similar result for $\Ga$-reduction.
But, since we want to incorporate rewriting modulo $\be$,
we proceed differently.

First, we introduce the notion of uniform terms.
These are terms verifying an arity constraint on their free variables.

\begin{definition}[Uniform Terms]
	A term $t$ is \emph{uniform} for a set of variables $V$ if
	all occurrences of a variable free in $t$ not in $V$ is applied to the same
	number of arguments.
\end{definition}

\newcommand\Pat{\mathcal{P}}
\newcommand\ari{\mathcal{A}}

Now, we define an encoding for uniform terms.

\begin{definition}[Encoding of uniform terms]\label{def:mod:encoding-uniform-term}
	Let $V$ be a set of variables and
	$t$ be a term uniform in $V$.
	The HRS-term $\embfv{u}{V}$ of type \hrsTerm{} is defined as follows:\\[0.2cm]
	\begin{tabular}{lcl}
		$\embfv{\Kind}{V}$		&:= & \hrsKind \\
		$\embfv{\Type}{V}$ 		&:= & \hrsType\\
		$\embfv{x}{V}$			&:= & $x$ \emph{if $x\in V$ (variable of type \hrsTerm)} \\
		$\embfv{c}{V}$			&:= & {\tt c} \\
		$\embfv{\la x:A.u}{V}$ 	&:= & \hrsLam($\embfv{A}{V}$, $\ula x.\embfv{u}{V\cup\{x\}})$ \\
		$\embfv{\Pi x:A.B}{V}$ 	&:= & \hrsPi($\embfv{A}{V}$, $\ula x.\embfv{B}{V\cup\{x\}}$) \\
		$\embfv{x\vec{v}}{V}$	&:= &
			$x(\embfv{\vec{v}}{V})$ \emph{if $x\notin V$ ($x$ of type $\hrsTerm^{n+1}$ where $n=|\vec{v}|$)} \\
		$\embfv{uv}{V}$	&:= &
			\hrsApp($\embfv{u}{V}$,$\embfv{v}{V}$) \emph{if $uv \ne x~\vec{w}$ for $x\notin V$}
	\end{tabular}
\end{definition}

Now, we define an equivalent of patterns for the \lpm{}.

\begin{definition}[$\la\Pi$-patterns]\label{def:la-Pi-pattern}
	Let $V_0$ be a set of variables, $\ari$ be a function giving an arity to variables and let $V=(V_0,\ari)$.
	The subset $\Pat_V$ of $\la\Pi$-terms is defined inductively as follows:
	\begin{itemize}
		\item if $c$ is a constant, then $c\in\Pat_V$;
		\item if $p,q\in\Pat_V$, then $p~q\in\Pat_V$;
		\item if $x\in V_0$, then $x\in\Pat_V$;
		\item if $p\in\Pat_V$, $x\notin V_0$ and $\vec{y}$ is a vector
			of pairwise distinct variables in $V_0$ such that $|\vec{y}|=\ari(x)$,
			then $p~(x~\vec{y})\in\Pat_V$;
		\item if $p\in\Pat_V$, $FV(A)\subset V_0$ and $q\in\Pat_{(V_0\cup\{x\},\ari)}$, then
			$p~(\la x:A.q)\in\Pat_V$;
	\end{itemize}
	A term $t$ is a \emph{$\la\Pi$-pattern} if, for some arity function $\ari$, $t\in\Pat_{(\emptyset,\ari)}$.
\end{definition}

Remark that the encoding of a $\la\Pi$-pattern as a uniform term is a pattern.

We now define the encoding of rewrite rules.

\begin{definition}[Encoding of Rewrite Rules]\label{def:encoding-rules}
	Let $(u\rw v)$ be a rewrite rule such that
	\begin{itemize}
		\item $u$ is a $\la\Pi$-pattern;
		\item $FV(v)\subset FV(u)$;
		\item all free occurrences of a variable in $u$ and $v$ are applied to
			the same number of arguments.
	\end{itemize}
	The encoding of $(u\rw v)$ is
	$\emb{u\rw v} = \embfv{u}{\emptyset}\rw\embfv{v}{\emptyset}$.
\end{definition}

Remark that the first assumption ensures that the left-hand side is a pattern
and the third assumption ensures that the HRS-term is well-typed.

\begin{definition}[HRS($\Ga$)]\label{def:HRSGa}
	Let $\Ga$ a global context whose rewrite rules satisfy the condition of \autoref{def:encoding-rules}.
	We write HRS($\Ga$) for the HRS $\{ \emb{u\rw v}~|~(u\rw v)\in\Ga\}$
	and HRS($\be\Ga$) for $HRS(\Ga)\cup\{(beta)\}$.
\end{definition}

\section{Rewriting Modulo \texorpdfstring{$\be$}{}}
\label{sec:rewriting-modulo-beta}

\subsection{Definition}
\label{ssec:beta-rewriting-definition}

We are now able to properly define rewriting modulo $\be$.
As for usual rewriting, rewriting modulo $\be$ is defined on all (untyped) terms.

\begin{definition}[Rewriting Modulo $\be$]\label{def:rewriting-modulo}
	Let $\Ga$ be a global context.
	We say that $t_1$ \emph{rewrites to $t_2$ modulo $\be$} (written $t_1 \redgb t_2$)
	if $\emb{t_1}$ rewrites to $\emb{t_2}$ in HRS($\Ga$).
	Similarly, we write $t_1 \redbgb t_2$
	if $\emb{t_1}$ rewrites to $\emb{t_2}$ in HRS($\be\Ga$).
\end{definition}

\begin{lemma}\label{rmk:RedbgRedbgb}\ 
	\begin{itemize}
		\item $\redbgb = \redgb \cup \redb$.
		\item If $t_1 \redg t_2$ then $t_1 \redgb t_2$.
	\end{itemize}
\end{lemma}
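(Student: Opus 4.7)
The first bullet follows almost directly from unfolding the definition of $\redbgb$ together with HRS($\be\Ga$) $=$ HRS($\Ga$) $\cup \{(beta)\}$. For the inclusion $\redbgb \subseteq \redgb \cup \redb$, every $\redbgb$-step on $t_1$ is by definition an HRS($\be\Ga$)-step on $\emb{t_1}$, using either a rule coming from $\Ga$ (in which case $t_1 \redgb t_2$ holds by definition) or the $(beta)$ rule; in the latter case, since $\emb{t_1}$ and $\emb{t_2}$ both have atomic type $\hrsTerm$, the second item of~\autoref{rmk:BetaBeta} yields $t_1 \redb t_2$. The reverse inclusion is symmetric: a $\redgb$-step is by definition an HRS($\Ga$)-step (hence an HRS($\be\Ga$)-step) on encodings, while a $\redb$-step lifts to a $(beta)$-step on encodings by the first item of~\autoref{rmk:BetaBeta}.

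For the second bullet, given $t_1 \redg t_2$, extract a rule $(u\rw v)\in\Ga$, a position $p$ in $t_1$, and a \lpm{} substitution $\si$ with $t_1|_p = \si(u)$ and $t_2 = t_1[p\leftarrow\si(v)]$. The plan is to fire the encoded rule $(\embfv{u}{\emptyset} \rw \embfv{v}{\emptyset})$ of HRS($\Ga$) at the HRS-position in $\emb{t_1}$ corresponding to $p$. To do so, translate $\si$ into an HRS substitution $\tau$: for each free variable $x$ of $u$, let $n_x$ denote the common number of arguments to which $x$ is applied throughout $u$ and $v$ (well-defined by condition~(iii) of~\autoref{def:encoding-rules}), and set
$$\tau(x)\ :=\ \ula y_1\ldots \ula y_{n_x}.\, \emb{\si(x)\ y_1\cdots y_{n_x}}$$
with the $y_i$ chosen fresh. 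By construction, $\tau(x)$ has HRS-type $\hrsTerm^{n_x+1}$, matching the type assigned to $x$ in the encoded rule, so $\tau$ is a well-typed HRS substitution.

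The heart of the proof is then a substitution lemma of the form $\bel \tau(\embfv{w}{V}) = \emb{\si(w)}$, for any \lpm{} term $w$ uniform in a set $V$ disjoint from the auxiliary $y_i$ (and on which $\si$ acts as the identity). It is proved by induction on $w$: constants, variables in $V$, abstractions, products, and applications not headed by a free meta-variable are routine, using the inductive hypothesis and the fact that $\emb{.}$ already produces long $\be\eta$-normal forms. The key case is $w = x~\vec{w'}$ with $x$ a free meta-variable: unfolding $\tau(x)$, performing $n_x$ meta-level $\be$-reductions, and substituting each $\bel \tau(\embfv{w'_i}{V})$ (which equals $\emb{\si(w'_i)}$ by the inductive hypothesis) for the corresponding $y_i$ inside $\emb{\si(x)\,y_1\cdots y_{n_x}}$ yields exactly $\emb{\si(x)\,\si(w'_1)\cdots\si(w'_{n_x})}$. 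Applying this lemma with $V = \emptyset$ to both $u$ and $v$, and combining with the compatibility of $\emb{.}$ with contextual positions, produces the desired HRS($\Ga$)-step $\emb{t_1}\red\emb{t_2}$, hence $t_1 \redgb t_2$. The main technical obstacle is precisely setting up and discharging this substitution lemma: descending under \lpm{} binders requires $\alpha$-renaming to prevent capture of the $y_i$, and one has to verify that the uniformity assumption underlying $\embfv{.}{V}$ is preserved along the induction, which ultimately rests on the $\la\Pi$-pattern discipline imposed on $u$.
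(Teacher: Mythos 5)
The paper states this lemma without a proof, so there is no official argument to diverge from; your proof is correct and supplies exactly the expected details: the first bullet by splitting each HRS($\be\Ga$)-step according to whether the rule used is $(beta)$ or comes from $\Ga$ and invoking \autoref{rmk:BetaBeta} together with the bijection of \autoref{rmk:Bijection}, and the second by building the HRS substitution $\tau$ that mirrors, in the opposite direction, the $\hat{\si}$ construction the paper uses in the proof of \autoref{lem:RedgbImpliesRedg}. One small correction: the key case of your substitution lemma ($w=x~\vec{w'}$ with $x$ free) rests only on the uniformity condition (the third hypothesis of \autoref{def:encoding-rules}), not on the $\la\Pi$-pattern discipline, whose role is merely to make the encoded left-hand side an HRS pattern so that the encoded pair is a legitimate HRS rewrite rule.
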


\subsection{Example}
\label{ssec:rewriting-modulo-example}

Let us look at the example from the introduction.
Now we have :
$$\Diff~(\la x:R.\Exp~x) \redgb \Fmult~(\Diff~(\la x:R.x))~(\la x:R.\Exp~x) $$
Indeed, for $\si = \{ f \mapsto \ula y.y\}$ we have
$$\emb{\Diff~(\la x:R.\Exp~x)}
	= \hrsApp (\Diff, \hrsLam(R,\ula x.\hrsApp(\Exp,x)))
	= \bel \si( \hrsApp (\Diff, \hrsLam(R,\ula x.\hrsApp(\Exp,f(x))) ))$$
and
$$
\begin{array}{ll}
	\emb{\Fmult~(\Diff~(\la x:R.x))~(\la x:R.\Exp~x)}
	& = \hrsApp ( \Fmult , \hrsApp ( \Diff, \hrsLam (R,\ula x.x) ), \hrsLam (R, \ula x.\hrsApp(\Exp,x) )) \\
	& = \bel \si( \hrsApp ( \Fmult , \hrsApp ( \Diff, \hrsLam (R, \ula x.f(x)) ) ,
	\hrsLam( R, \ula x. \hrsApp ( \Exp, f(x) ) ) ) )
\end{array}
$$

Therefore, the peak is now joinable.

\begin{tikzpicture}
	\node (n1) at (8,2) {$\Diff~(\la x:R.\Exp~((\la y:R.y)~x))$};
	\node (n2) at (4,0) {$\Fmult~(\Diff~(\la x:R.(\la y:R.y)~x))~(\la x:R.(\Exp~((\la y:R.y)~x)))$};
	\node (n3) at (12,0) {$\Diff~(\la x:R.\Exp~x)$};
	\node (n4) at (8,-2) {$\Fmult~(\Diff~(\la x:R.x))~(\la x:R.\Exp~x)$};

	\draw[->] (n1) -- (n2) node[pos=0.7,left]{\Diff};
	\draw[->] (n1) -- (n3) node[pos=0.7,right]{$\be$};
	\draw[->] (n3) -- (n4) node[pos=0.7,left]{$\Diff^\be$};
	\draw[->] (n2) -- (n4) node[pos=0.7,right]{$\be^*$};
\end{tikzpicture}

In fact the rewriting relation can be shown confluent~\cite{DBLP:conf/hoa/Oostrom95}.

\subsection{Properties}
\label{ssec:properties-modulo}

Rewriting modulo $\be$ also preserves typing.

\begin{theorem}[Subject Reduction for $\redgb$]\label{thm:subject-reduction-modulo}
	Let $\Ga$ a well-formed global context and $\De$ a local context well-formed for $\Ga$.
	If $\typg{\De}{t_1}{T}$ and $t_1 \redgb t_2$ then $\typg{\De}{t_2}{T}$.
\end{theorem}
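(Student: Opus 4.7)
The plan is to reduce subject reduction for $\redgb$ to the already-established subject reduction for $\redbg$ (\autoref{thm:SubjectReduction}), by simulating each modulo-$\beta$ step as a $\beta$-expansion inside $t_1$ that exposes a genuine $\Ga$-redex, followed by a single $\redg$-step, followed by $\beta$-reductions. Concretely, from $t_1\redgb t_2$ and \autoref{def:rewriting-modulo} I would first unpack the data: a rule $(u\rw v)\in\Ga$, a position $p$, and a well-typed HRS-substitution $\sigma$ such that $\emb{t_1}$ has $\bel\sigma(\embfv{u}{\emptyset})$ at position $p$ while $\emb{t_2}$ has $\bel\sigma(\embfv{v}{\emptyset})$ there.

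From $\sigma$ I would construct a $\la\Pi$-substitution $\tau$ with domain $FV(u)$ as follows. For each pattern variable $f$ occurring in $u$ as $f~\vec{y}$ with $|\vec{y}|=n$, the image $\sigma(f)$ has HRS-type $\hrsTerm^{n+1}$ and, in long $\beta\eta$-normal form, is of shape $\ula\vec{x}.N$ with $N$ of type $\hrsTerm$; by \autoref{rmk:Bijection}, $N$ is the encoding of a unique $\la\Pi$-term $\emb{N}^{-1}$. I would then set $\tau(f) := \la\vec{x}:\vec{A}.\emb{N}^{-1}$, where the types $\vec{A}$ are read off from the bindings of the $\vec{y}$ in $t_1$ at position $p$. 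Unwinding \autoref{def:mod:encoding-uniform-term} and the definition of HRS-substitution, one checks that $\tau(u)\to_\beta^* t_1|_p$ and $\tau(v)\to_\beta^* t_2|_p$, the intermediate redexes being precisely the subterms $(\la\vec{x}:\vec{A}.\emb{N}^{-1})~\vec{y}$ created at the pattern sites.

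Given these ingredients the rest is routine. From well-typedness of $t_1$ there is a well-formed local context $\De'\supseteq\De$ and a type $S$ with $\typg{\De'}{t_1|_p}{S}$. The careful choice of $\vec{A}$ makes $\tau$ a well-typed substitution for $u$, so $\typg{\De'}{\tau(u)}{S}$; permanent well-typedness of $(u\rw v)$ then gives $\typg{\De'}{\tau(v)}{S}$, and subject reduction for $\redb$ (a particular case of \autoref{thm:SubjectReduction}) propagates the typing to $t_2|_p$. A standard replacement lemma lifts this back to $\typg{\De}{t_2}{T}$.

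The main obstacle is the extraction step: showing that $\tau(u)$ is well-typed in $\De'$ given only that $\tau(u)\eqb t_1|_p$ and that the target is well-typed. Since $\beta$-equivalence does not preserve typability in the untyped \lpm{} in general (as \autoref{sec:naive-beta-rewriting} illustrates), merely knowing $\tau(u)\to_\beta^* t_1|_p$ is not enough. The argument must crucially exploit the $\la\Pi$-pattern restriction on $u$: Miller's pattern property gives uniqueness of $\sigma$, and together with the arity-constrained shape of pattern occurrences it lets us canonically recover the types needed to assemble each $\tau(f)$ from the local typing information available at the pattern sites in $t_1|_p$, at which point the well-typedness of $\tau(u)$ follows by a direct induction.
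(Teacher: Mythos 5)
Your proposal follows essentially the same route as the paper: the paper derives the theorem from \autoref{lem:RedgbImpliesRedg}, which lifts the HRS substitution $\si$ back to a $\la\Pi$-substitution $\hat{\si}$ (decoding each $\si(f)=\ula\vec{x}.N$ as $\la\vec{x}:\vec{A}.\emb{N}^{-1}$ with the annotations $\vec{A}$ chosen using the well-typedness of $t_1$) so that $t_1 \leftarrow_\be^* \hat{\si}(u) \redg \hat{\si}(v) \redb^* t_2$, and then concludes by subject reduction for $\redbg$. Your identification of the well-typedness of $\tau(u)$ as the delicate point is apt --- the paper dispatches it in one sentence --- but the decomposition, the construction of the lifted substitution, and the way typing is propagated are the same.
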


It directly follows from the following lemma:

\begin{lemma}\label{lem:RedgbImpliesRedg}
	If $t_1 \redgb t_2$ then, for some $t_1'$ and $t_2'$,
	we have $t_1 \leftarrow^*_\be t_1' \redg t_2' \redb^* t_2$.
	Moreover, if $t_1$ is well-typed then we can choose $t_1'$
	such that it is well-typed in the same context.
\end{lemma}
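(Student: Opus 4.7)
The plan is to unfold the HRS rewriting that defines $\redgb$ into an ordinary $\redg$-step sandwiched between $\redb$-reductions, and then to address typing by choosing the lifted substitution carefully. First I would unpack the hypothesis: by \autoref{def:rewriting-modulo}, there is a rule $(u\rw v)\in\Ga$, a position $p$ in $\emb{t_1}$, and an HRS substitution $\sigma$ on the pattern variables of $\embfv{u}{\emptyset}$ with $\emb{t_1}|_p = \bel\sigma(\embfv{u}{\emptyset})$ and $\emb{t_2} = \emb{t_1}[p\mapsto \bel\sigma(\embfv{v}{\emptyset})]$. Since the pattern variables have HRS-type $\hrsTerm^{n+1}$ and are applied only to bound variables of base type $\hrsTerm$, no $\eta$-expansion is required during $\bel$-normalization, and $\bel\sigma(\embfv{u}{\emptyset})$ is just the $\be$-normal form of $\sigma(\embfv{u}{\emptyset})$.

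Next I would lift $\sigma$ to a $\la\Pi$-substitution $\sigma_{LF}$. For each free variable $X$ of $u$ with arity $n_X$, we have $\sigma(X) = \ula z_1\ldots z_{n_X}.s_X$ with $s_X$ of HRS-type $\hrsTerm$; by \autoref{rmk:Bijection}, $s_X = \emb{s_X^{LF}}$ for a unique $\la\Pi$-term $s_X^{LF}$. I set $\sigma_{LF}(X) := \la z_1{:}A_1.\ldots\la z_{n_X}{:}A_{n_X}.\,s_X^{LF}$, with type annotations $A_i$ chosen in the next step. Define $t_1' := t_1[p\mapsto \sigma_{LF}(u)]$ and $t_2' := t_1[p\mapsto \sigma_{LF}(v)]$. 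Then $t_1' \redg t_2'$ is immediate since $(u\rw v)\in\Ga$. To obtain the two $\be$-reductions, observe that at each occurrence $X\vec{y}$ in $u$ we have $\sigma_{LF}(X\vec{y}) = (\la z_1{:}A_1.\ldots\la z_{n_X}{:}A_{n_X}.\,s_X^{LF})\vec{y}\ \redb^*\ s_X^{LF}[\vec{z}/\vec{y}]$. Contracting all such redexes in $\sigma_{LF}(u)$ yields a $\la\Pi$-term whose encoding is the $\be$-normal form of $\sigma(\embfv{u}{\emptyset})$, i.e.\ $\emb{t_1}|_p$; by \autoref{rmk:Bijection} and \autoref{rmk:BetaBeta}, this term is $t_1|_p$. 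Hence $\sigma_{LF}(u) \redb^* t_1|_p$, and symmetrically $\sigma_{LF}(v) \redb^* t_2|_p$. Closing under subterm rewriting gives $t_1'\redb^* t_1$ and $t_2'\redb^* t_2$, as required.

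For the moreover clause, assume $t_1$ is well-typed in $\De$; then the subterm $t_1|_p$ is well-typed in some extended context $\De_p$ (adding the bindings traversed to reach $p$). I would choose each $A_i$ to be the type annotation carried by the $\la$-binder of $u$ that binds the $i$-th variable of $\vec{y}$ in some fixed occurrence $X\vec{y}$. With this choice, $\sigma_{LF}(X)$ has a product type matching what $X$ must have for $\sigma_{LF}(u)$ to typecheck in $\De_p$, and $\sigma_{LF}(u)$ is well-typed with the same type as $t_1|_p$ (since its $\be$-reduct is). Plugging it back at $p$ preserves well-typedness of $t_1$ because the Conversion rule absorbs any difference via $\eqbg$. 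The main obstacle is ensuring that this single choice of $A_i$'s is consistent with \emph{every} occurrence of $X$ in $u$: different occurrences may use different bound variables $\vec{y}$ from different enclosing binders. This consistency follows from well-typedness of the rewrite rule $(u\rw v)$ in the well-formed context $\Ga$, which forces the types of the binders above any $X$-occurrence in $u$ to be $\eqbg$-convertible component-wise; the Conversion rule then lets the single $\sigma_{LF}(X)$ serve at all occurrences.
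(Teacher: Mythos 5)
Your proposal is correct and follows essentially the same route as the paper: unfold the HRS step into a rule $(u\rw v)$ and substitution $\si$, lift $\si$ to a $\la\Pi$-substitution by decorating the abstracted variables with type annotations, take $t_1'=\hat{\si}(u)$ (in context at the rewrite position) so that the surrounding $\be$-steps are exactly the contractions of the administrative redexes $(\la\vec{z}:\vec{A}.s)\,\vec{y}$, and choose the $A_i$ from the binders above the occurrences of each pattern variable to get well-typedness. You are in fact more explicit than the paper on the two delicate points it glosses over, namely the handling of the rewrite position and the consistency of a single choice of $A_i$ across several occurrences of the same pattern variable.
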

\begin{proof}
	The idea is to lift the $\be$-reductions that occur at the HRS level to the \lpm{}.
	Suppose $t_1 \redgb t_2$.
	For some rewrite rule $(u \rw v)$ and (HRS) substitution $\si$,
	we have $\bel \si(u) = \emb{t_1}$ and $\bel \si(v) = \emb{t_2}$.
	We define the ($\la\Pi$) substitution $\hat{\si}$ as follows:
	$\hat{\si}(x) = \emb{\si(x)}^{-1}$ if $\si(x)$ has type $\hrsTerm$;
	$\hat{\si}(x) = \la\vec{x}:\vec{A}. \emb{u}^{-1}$
	if $\si(x) = \ula \vec{x}.u$ has type $\hrsTerm^n \ar \hrsTerm$
	where the $A_i$ are arbitrary types.
	We have, at the $\la\Pi$ level,
	$\hat{\si}(u) \redg \hat{\si}(v)$,
	$\hat{\si}(u) \redb^* t_1$ and
	$\hat{\si}(v) \redb^* t_2$.
	If $t_1$ is well-typed then the $A_i$ can be chosen so that
	$\hat{\si}(u)$ is also well-typed.
\end{proof}

Another consequence of this lemma is that the rewriting modulo $\be$ does not modify the congruence.

\begin{theorem}\label{thm:eqbgIseqbgb}
	The congruence generated by $\redbgb$ is equal to $\eqbg$.
\end{theorem}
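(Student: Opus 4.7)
The plan is to prove both inclusions $\eqbg \subseteq \eqbgb$ and $\eqbgb \subseteq \eqbg$ separately, each reducing to a simple containment of the underlying one-step reduction relations.

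For the forward inclusion, I would show that $\redbg \subseteq \redbgb$ and then conclude by minimality of $\eqbg$. By definition $\redbg = \redb \cup \redg$; by \autoref{rmk:RedbgRedbgb} we have $\redg \subseteq \redgb$ and $\redbgb = \redgb \cup \redb$, so $\redbg \subseteq \redbgb$. Since $\eqbgb$ is a congruence containing $\redbg$, and $\eqbg$ is the \emph{smallest} such congruence, it follows that $\eqbg \subseteq \eqbgb$.

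For the reverse inclusion, it is enough to prove $\redbgb \subseteq \eqbg$: taking the congruence closure of both sides then gives $\eqbgb \subseteq \eqbg$. Using $\redbgb = \redgb \cup \redb$ once more, the $\redb$ summand is immediate because $\redb \subseteq \redbg \subseteq \eqbg$. For the $\redgb$ summand I invoke \autoref{lem:RedgbImpliesRedg}, which decomposes every step $t_1 \redgb t_2$ as $t_1 \leftarrow^*_\be t_1' \redg t_2' \redb^* t_2$; each of the three links sits in $\eqbg$ (by symmetry, as a single $\redg$ step, and as a chain of $\redb$ steps, respectively), so transitivity of $\eqbg$ yields $t_1 \eqbg t_2$.

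The whole argument is essentially bookkeeping once \autoref{lem:RedgbImpliesRedg} is available; there is no real obstacle, only the need to keep the relations $\redg$ versus $\redgb$ and their induced congruences carefully distinguished.
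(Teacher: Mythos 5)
Your proof is correct and follows exactly the route the paper intends: the paper's own proof consists of the single line ``Follows from \autoref{rmk:RedbgRedbgb} and \autoref{lem:RedgbImpliesRedg},'' and your two inclusions ($\eqbg\subseteq\eqbgb$ from $\redbg\subseteq\redbgb$, and $\eqbgb\subseteq\eqbg$ from the decomposition $t_1 \leftarrow^*_\be t_1' \redg t_2' \redb^* t_2$) are precisely the bookkeeping that line leaves implicit. No differences to report.
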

\begin{proof}
	Follows from~\autoref{rmk:RedbgRedbgb} and~\autoref{lem:RedgbImpliesRedg}.
\end{proof}

\subsection{Generalized Criteria for Product Compatibility and Well-Typedness of Rewrite Rules}
\label{ssec:criteria-modulo}

Using our new notion of rewriting modulo $\be$, we can generalize the criteria of~\autoref{ssec:criteria}.

\begin{theorem}\label{thm:product-compat-by-HO-confluence}
	Let $\Ga$ be a global context. If HRS($\be\Ga$) is confluent, then product compatibility holds for~$\Ga$.
\end{theorem}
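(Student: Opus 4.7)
The plan is to transport the equivalence $\Pi x:A_1.B_1 \eqbg \Pi x:A_2.B_2$ into the HRS world, use confluence there to find a common reduct, exploit the fact that $\hrsPi$ is head-preserved by HRS($\be\Ga$)-reductions to recover componentwise common reducts of $A_1,A_2$ and of $B_1,B_2$, and finally transport the resulting equivalences back.

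More concretely, I would proceed as follows. First, by \autoref{thm:eqbgIseqbgb} the congruences $\eqbg$ and $\eqbgb$ coincide, so from $\Pi x:A_1.B_1 \eqbg \Pi x:A_2.B_2$ we get $\Pi x:A_1.B_1 \eqbgb \Pi x:A_2.B_2$. Unfolding the definition of $\redbgb$ (\autoref{def:rewriting-modulo}), this translates to $\emb{\Pi x:A_1.B_1}$ and $\emb{\Pi x:A_2.B_2}$ being in the congruence generated by HRS($\be\Ga$). Applying confluence of HRS($\be\Ga$), there exists an HRS-term $W$ of type \hrsTerm{} such that $\emb{\Pi x:A_i.B_i}=\hrsPi(\emb{A_i},\ula x.\emb{B_i})\to^*_{HRS(\be\Ga)} W$ for $i=1,2$.

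The key observation, which I would state and prove as an auxiliary lemma, is a head-preservation property: no rule of HRS($\be\Ga$) can fire at the root of a term of the form $\hrsPi(\cdot,\cdot)$. Indeed, the $(beta)$ rule has head $\hrsApp$, and each encoded rule $\emb{u\rw v}$ comes from a $\la\Pi$-pattern $u$ whose head, by \autoref{def:la-Pi-pattern} with $V_0=\emptyset$, is a constant in $\OCst\cup\TCst$ (neither $\hrsPi$ nor $\hrsLam$, and not a free variable). Since all HRS reductions take place in subterms, this forces $W$ to have the form $\hrsPi(W_A,\ula x.W_B)$ with $\emb{A_i}\to^*_{HRS(\be\Ga)} W_A$ and $\emb{B_i}\to^*_{HRS(\be\Ga)} W_B$ for both $i$.

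By the bijection of \autoref{rmk:Bijection}, $W_A=\emb{A'}$ and $W_B=\emb{B'}$ for unique $\la\Pi$-terms $A',B'$, and the HRS reductions pull back to $A_i \redbgb^* A'$ and $B_i \redbgb^* B'$. Hence $A_1\eqbgb A_2$ and $B_1\eqbgb B_2$, and a final application of \autoref{thm:eqbgIseqbgb} gives $A_1\eqbg A_2$ and $B_1\eqbg B_2$, which is product compatibility. The main obstacle is the head-preservation lemma: it is conceptually straightforward but requires a careful inspection of \autoref{def:la-Pi-pattern} and \autoref{def:encoding-rules} to be sure that no encoded left-hand side can, after a well-typed HRS-substitution and long $\be\eta$-normalisation, acquire $\hrsPi$ as its head symbol.
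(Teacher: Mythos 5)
Your proof is correct and follows essentially the same route as the paper: convert $\eqbg$ to $\eqbgb$ via \autoref{thm:eqbgIseqbgb}, use confluence of HRS($\be\Ga$) to obtain componentwise common reducts of $A_1,A_2$ and $B_1,B_2$, and convert back. The only difference is that you make explicit (via the head-preservation lemma for $\hrsPi$) the step that the paper silently assumes when it asserts that the common reduct of the two product types yields common reducts of their components; your justification of that lemma is sound, since the $(beta)$ rule and the encodings of $\la\Pi$-patterns all have head $\hrsApp$ or a constant, never $\hrsPi$.
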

\begin{proof}
	Assume that $\Pi x:A_1.B_1 \eqbg \Pi x:A_2.B_2$ then,
	by~\autoref{thm:eqbgIseqbgb}, $\Pi x:A_1.B_1 \eqbgb \Pi x:A_2.B_2$.
	By confluence, there exist $A_0$ and $B_0$ such that
	$A_1 \redbgb^* A_0$, $A_2 \redbgb^* A_0$, $B_1 \redbgb^* B_0$ and $B_2 \redbgb^* B_0$.
	It follows, by~\autoref{thm:eqbgIseqbgb}, that $A_1 \eqbg A_2$ and $B_1 \eqbg B_2$.
\end{proof}

To prove the confluence of a HRS,
one can use van Oostrom's development-closed theorem~\cite{DBLP:conf/hoa/Oostrom95}.

\autoref{thm:strongly-well-typed} can also be generalized to deal with $\la\Pi$-patterns.

\begin{theorem}\label{thm:strong-well-typed-pattern}
	Let $\Ga$ be a well-formed global context and $(u\rw v)$ be a rewrite rule.
	If $u$ is a $\la\Pi$-pattern and there exist $\De$ and $T$ such that
	$\typctx{\De}$, $FV(u)=dom(\De)$, $\typg{\De}{u}{T}$ and $\typg{\De}{v}{T}$
	then $(u\rw v)$ is permanently well-typed for $\Ga$.
\end{theorem}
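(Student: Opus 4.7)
The plan is to adapt the argument for \autoref{thm:strongly-well-typed} by replacing the substitution $\si$ with a canonically $\eta$-expanded version that exploits the $\la\Pi$-pattern structure of $u$. Fix an extension $\Ga_0 \supset \Ga$ satisfying product compatibility, a substitution $\si$, a local context $\De'$ well-formed for $\Ga_0$, and a term $T'$ such that $\typ{\Ga_0}{\De'}{\si(u)}{T'}$; the goal is to conclude $\typ{\Ga_0}{\De'}{\si(v)}{T'}$.

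The core construction is a substitution $\hat{\si}$ with $dom(\hat{\si}) = dom(\De) = FV(u)$, designed so that $\hat{\si}(u) \redb^* \si(u)$ and $\hat{\si}$ is well-typed from $\De$ into $\De'$ over $\Ga_0$. By \autoref{def:la-Pi-pattern}, each $x \in FV(u)$ occurs in $u$ only in positions $x\,\vec{y}$ for a fixed tuple $\vec{y}$ of distinct bound variables, whose binders in $u$ carry type annotations $\vec{C}$. Traversing the typing derivation of $\si(u)$ in step with the syntax of $u$, inversion at each binder yields that each $y_i$ has type $\si(C_i)$ in the corresponding extended local context, and inversion at the application position $\si(x)\,\vec{y}$ yields that $\si(x)$ has a product type whose domain agrees (modulo $\eqbg$ in $\Ga_0$) with $\si(\vec{C})$. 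Setting $\hat{\si}(x) := \la \vec{y}{:}\si(\vec{C}).\si(x)\,\vec{y}$ (and $\hat{\si}(x) := \si(x)$ for variables of $u$ occurring without arguments) gives an $\eta$-expansion of $\si(x)$. Using product compatibility in $\Ga_0$ to match the extracted domains against $\si(\De(x))$, one checks that $\typ{\Ga_0}{\De'}{\hat{\si}(x)}{\hat{\si}(\De(x))}$ for every $x \in dom(\De)$.

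Then the standard substitution lemma for the \lpm{} (applied in $\Ga_0$, since all derivations from $\Ga$ remain valid in the extension) turns $\typg{\De}{u}{T}$ and $\typg{\De}{v}{T}$ into $\typ{\Ga_0}{\De'}{\hat{\si}(u)}{\hat{\si}(T)}$ and $\typ{\Ga_0}{\De'}{\hat{\si}(v)}{\hat{\si}(T)}$. Each subterm $\hat{\si}(x)\,\vec{y}$ is a $\be$-redex reducing to $\si(x)\,\vec{y}$, and every free occurrence of $x$ in $u$ has this shape by the pattern condition, so $\hat{\si}(u) \redb^* \si(u)$; the same reasoning, together with $FV(v) \subset dom(\De) = FV(u)$ (inherited from $\typg{\De}{v}{T}$) and the matching arities, gives $\hat{\si}(v) \redb^* \si(v)$. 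Applying \autoref{thm:SubjectReduction} in $\Ga_0$ yields $\typ{\Ga_0}{\De'}{\si(u)}{\hat{\si}(T)}$ and $\typ{\Ga_0}{\De'}{\si(v)}{\hat{\si}(T)}$, and \autoref{thm:UniquenessOfType} gives $\hat{\si}(T) \eqbg T'$; the conversion rule then delivers $\typ{\Ga_0}{\De'}{\si(v)}{T'}$.

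The main obstacle is the second step, namely making the extraction of the domain types $\si(\vec{C})$ precise and verifying that $\hat{\si}$ is type-correct. One must walk the typing derivation of $\si(u)$ in lockstep with the syntactic structure of $u$, accumulating the local context as one descends past the binders of $u$, and invoke product compatibility in $\Ga_0$ to reconcile each inversion-derived product type with the one expected from $\De$. The pattern restriction that the $\vec{y}$'s be distinct bound variables is exactly what guarantees that the "canonical" $\eta$-expansion $\hat{\si}(x)$ is well-defined and that the resulting $\be$-reductions commute cleanly to recover $\si(u)$ and $\si(v)$. Everything else — the substitution lemma, subject reduction, uniqueness of types, and the use of PC — is available from the earlier sections.
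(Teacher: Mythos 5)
Your overall strategy shares the paper's key idea --- walk the typing derivation of $\si(u)$ in lockstep with the pattern structure of $u$, using inversion and product compatibility to extract typing information about the images $\si(x)$, then conclude for $v$ by a substitution lemma plus conversion --- but the packaging via the $\eta$-expanded substitution $\hat{\si}$ introduces a genuine gap. The step $\hat{\si}(v)\redb^*\si(v)$ requires every free occurrence of every variable $x$ in $v$ to be applied to exactly $\ari(x)$ arguments. That arity condition is \emph{not} among the hypotheses of the theorem (it is only imposed later, in the definition of the encoding of rewrite rules, for a rule to be HRS-encodable); the theorem assumes only that $u$ is a $\la\Pi$-pattern and that both sides typecheck in $\De$. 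Since the calculus has no $\eta$-reduction, an occurrence of $x$ in $v$ that is unapplied (or applied to fewer than $\ari(x)$ arguments) yields in $\hat{\si}(v)$ the term $\la\vec{y}{:}\vec{C}.\si(x)\vec{y}$, which does not $\be$-reduce to $\si(x)$, so subject reduction cannot bring you back to $\si(v)$. Two secondary problems of the same flavour: \autoref{def:la-Pi-pattern} allows the binder annotations $\vec{C}$ to mention outer bound variables of $u$, in which case your $\hat{\si}(x)$ is not even typable in $\De'$ alone; and $\hat{\si}(\De(x))$ and $\hat{\si}(T)$ need not be $\eqbg$-related to $\si(\De(x))$ and $\si(T)$ when the types in $\De$ or the type $T$ contain a higher-arity variable at the ``wrong'' arity, again because $\eta$-expansion is not undone by $\be$.

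The paper avoids the detour entirely. Its \autoref{lem:MainLemma}, proved by induction on $t\in\Pat_{dom(\Si)}$, establishes directly that $\si(x)$ \emph{itself} is well-typed in $\De_2$ at a type convertible to $\si(\De(x))$: at an application spine $\si(x)\,\vec{y}$, inversion already types the head $\si(x)$ at a product $\Pi\vec{y}{:}\Si(\vec{y}).C_2$, and product compatibility identifies this product with $\si(\De(x))$ --- no $\eta$-expansion, hence no need to $\be$-reduce back. The conclusion then follows by applying the substitution lemma for $v$ with $\si$ unchanged, which places no arity constraint on $v$. Your inversion argument in the second paragraph already derives essentially the content of that lemma; if you drop $\hat{\si}$ and use those inversion facts to type $\si(x)$ directly, your proof collapses onto the paper's and the gap disappears.
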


This theorem is a corollary of the following lemma.

\newcommand\eqbgdeux{\equiv_{\be\Ga_2}}

\begin{lemma}\label{lem:MainLemma}
	Let $\Ga\subset\Ga_2$ be two well-formed global contexts.
	If $t\in\Pat_{dom(\Si)}$,
	$dom(\si)=dom(\De)$,
	for all $(x:A)\in\Si$, $\si(A)=A$,
	$\typg{\De\Si}{t}{T}$ and
	$\typ{\Ga_2}{\De_2\Si}{\si(t)}{T_2}$
	then
	$T_2 \eqbgdeux \si(T)$
	and, for all $x\in FV(t)\cap dom(\De)$,
		$\typ{\Ga_2}{\De_2}{\si(x)}{T_x}$ for $T_x \eqbgdeux \si(\De(x))$.
\end{lemma}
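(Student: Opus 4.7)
The plan is to prove both conclusions simultaneously by induction on the derivation of $t\in\Pat_{(dom(\Si),\ari)}$ according to \autoref{def:la-Pi-pattern}. The main tools used throughout are inversion of the typing rules of \autoref{fig:typing-lpm} and \autoref{thm:UniquenessOfType} applied inside $\Ga_2$ to reconcile types from different derivations of the same term.

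The base cases are immediate and make the second conclusion vacuous. If $t = c$ is a constant, then $\si(t) = c$ is typed by the same declaration in $\Ga\subset\Ga_2$, so uniqueness of types gives $T_2\eqbgdeux T = \si(T)$. If $t = x$ with $x\in dom(\Si)$, then $\si(x) = x$ (the domains of $\si$ and $\Si$ being disjoint) and the hypothesis $\si(A)=A$ for $(x:A)\in\Si$ gives $\si(T)=\Si(x)$, so uniqueness again closes the case.

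For the application subcase $t = p\,q$ with $p,q\in\Pat_V$, inversion of \textbf{(Application)} extracts a product type $\Pi y:A.B$ for $p$ and argument type $A$ for $q$; the IH on $p$ and $q$ propagates the first conclusion via substitution in $B$ and the second via the union of the recursive claims, using uniqueness of types to identify the domain types of the products that appear in the two derivations. For the abstraction subcase $t = p\,(\la z:A.q)$, one applies the IH to $q$ with $\Si$ extended by $(z:A)$, after noting that $FV(A)\subset dom(\Si)$ implies the extended $\si$ still fixes the enlarged context.

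The main obstacle is the application subcase $t = p\,(x\,\vec{y})$ with $x\notin dom(\Si)$ and $\vec{y}$ a vector of $\ari(x)$ distinct variables from $dom(\Si)$. Here $x\in dom(\De)$ is substituted by $\si$. Iterated inversion of \textbf{(Application)} on the typing of $x\,\vec{y}$ in $\De\Si$, combined with the \emph{distinctness} of the $\vec{y}$, forces $\De(x)$ to have the form $\Pi\vec{y}:\vec{A_y}.A_0$ with the $\vec{A_y}$ equivalent to the types of $\vec{y}$ recorded in $\Si$. Iterated inversion on $\si(x)\,\vec{y}$ in $\De_2\Si$ yields an analogous product typing for $\si(x)$; because the $\vec{y}$ are pairwise distinct and all belong to $dom(\Si)$, the resulting type precisely abstracts over them, so one can extract a typing of $\si(x)$ in $\De_2$ alone of the form $\Pi\vec{y}:\si(\vec{A_y}).\si(A_0)\eqbgdeux\si(\De(x))$. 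Without the pattern condition this extraction would fail, since an arbitrary application $\si(x)\,u$ does not determine a typing for $\si(x)$ by itself, and the second conclusion would not hold.
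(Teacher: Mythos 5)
Your proof follows essentially the same route as the paper's: induction on the derivation of $t\in\Pat_{(dom(\Si),\ari)}$, base cases by inversion, the application case split into the three subcases dictated by the grammar of $\Pat_V$, and the crucial subcase $t=p\,(x\,\vec{y})$ resolved by iterated inversion, exploiting the distinctness of the $\vec{y}\subset dom(\Si)$ to read off a product type $\Pi\vec{y}:\Si(\vec{y}).C_2$ for $\si(x)$ and hence a typing of $\si(x)$ in $\De_2$ alone that is convertible to $\si(\De(x))$. Your closing remark about why the pattern restriction is indispensable is exactly the right diagnosis.

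There is, however, one misattributed justification, and it sits at precisely the point where the paper's central hypothesis enters. In the case $t=p\,q$, the induction hypothesis on $p$ yields $\Pi y{:}\si(A).\si(B)\eqbgdeux\Pi y{:}A_2.B_2$, and you must split this into $\si(A)\eqbgdeux A_2$ and $\si(B)\eqbgdeux B_2$: the second gives the first conclusion $\si(T)\eqbgdeux\si(B[y/q])\eqbgdeux B_2[y/\si(q)]\eqbgdeux T_2$, and the first is what makes the $x\,\vec{y}$ subcase close (it is needed to show $C_2\eqbgdeux\si(C)$). You invoke uniqueness of types for this, but uniqueness of types only relates two types of the \emph{same} term in the \emph{same} context, and only as whole terms; it does not give injectivity of $\Pi$ with respect to $\eqbgdeux$. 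The property actually required is product compatibility of $\Ga_2$, i.e.\ ${\bf PC}(\Ga_2)$, which is available because $\Ga_2$ is well-formed (the rules {\bf(Type Declaration)} and {\bf(Rewrite Rules)} enforce it) — and which is the very reason the notion of permanently well-typed rewrite rule is defined relative to extensions satisfying product compatibility. With that one substitution of justification, your argument coincides with the paper's proof.
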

\begin{proof}
	We proceed by induction on $t\in\Pat_{dom(\Si)}$.
	\begin{itemize}
		\item if $t=c$ is a constant, then
			$FV(t)=\emptyset$ and, by inversion on $\typg{\De\Si}{t}{T}$, 
			there exists a (closed term) $A$ such that $(c:A)\in\Ga\subset\Ga_2$,
			$T \eqbg A$
			and $T_2 \eqbgdeux A$.
			Since $A = \si(A)$, we have $\si(T) \eqbgdeux T_2$.

		\item if $t=x\in dom(\Si)$, then, by inversion,
			there exists $A$ such that $(x:A)\in\Si$,
			$T \eqbg A$
			and $T_2 \eqbgdeux A$.
			Since $A = \si(A)$, we have $\si(T) \eqbgdeux T_2$.

		\item if $t=p~q$, then, by inversion,
			on the one hand,
			$\typg{\De\Si}{p}{\Pi x:A.B}$,
			$\typg{\De\Si}{q}{A}$
			and $T \eqbg B[x/q]$.
			On the other hand,
			$\typ{\Ga_2}{\De_2\Si}{\si(p)}{\Pi x:A_2.B_2}$,
			$\typ{\Ga_2}{\De_2\Si}{\si(q)}{A_2}$
			and $T_2 \eqbgdeux B_2[x/\si(q)]$.

			By induction hypothesis on $p$,
			we have $\si(\Pi x:A.B) \eqbgdeux \Pi x:A_2.B_2$ and
			for all $x\in FV(p)\cap dom(\De)$, $\typ{\Ga_2}{\De_2}{\si(x)}{T_x}$
				with $T_x \eqbgdeux \si(\De(x))$.

			By product-compatibility of $\Ga_2$, $\si(A) \eqbgdeux A_2$ and $\si(B) \eqbgdeux B_2$.
			It follows that $\si(T) \eqbgdeux \si(B[x/q]) \eqbgdeux B_2[x/\si(q)] \eqbgdeux T_2$.

			Now, we distinguish three sub-cases:
			\begin{itemize}
				\item either $q\in\Pat_{dom(\Si)}$ and
					by induction hypothesis on $q$,
					for all $x\in FV(q)\cap dom(\De)$,
					$\typ{\Ga_2}{\De_2}{\si(x)}{T_x}$ with $T_x \eqbgdeux \si(\De(x))$.
				\item Or $q=\la x:A.q_0$ with $FV(A)\in dom(\Si)$ and $q_0\in\Pat_{dom(\Si(x:A))}$ and
					by induction hypothesis on $q_0$,
					for all $x\in FV(q_0)\cap dom(\De)$,
					$\typ{\Ga_2}{\De_2}{\si(x)}{T_x}$ with $T_x \eqbgdeux \si(\De(x))$.
				\item Or $q=x\vec{y}$ with $x\notin dom(\Si)$ and $\vec{y}\subset dom(\Si)$.
					By inversion, on the one hand,
					$\De(x) \eqbg \Pi \vec{y}:\Si(\vec{y}).C$
					for $C \eqbg A$.
					On the other hand,
					$\typ{\Ga_2}{\De_2}{\si(x)}{\Pi \vec{y}:\Si(\vec{y}).C_2}$
					for $C_2 \eqbgdeux A_2$.
					Since $\si(A) \eqbgdeux A_2$, we have
					$\Pi \vec{y}:\Si(\vec{y}).C_2 \eqbgdeux \Pi \vec{y}:\Si(\vec{y}).\si(C) = \si(\De(x))$.
			\end{itemize}
	\end{itemize}
\end{proof}

\begin{proof}[Proof of~\autoref{thm:strong-well-typed-pattern}]
	Let $\Ga_2$ be a well-formed extension of $\Ga$.
	Suppose that $\typ{\Ga_2}{\De_2}{\si(u)}{T_2}$.

	By~\autoref{lem:MainLemma} and $FV(u)=dom(\De)$,
	we have, for all $x\in dom(\De)$, $\typ{\Ga_2}{\De_2}{\si(x)}{T_x}$
	for $T_x \equiv_{\be\Ga_2} \si(\De(x))$
	and $T_2 \equiv_{\be\Ga_2} \si(T)$.

	By induction on $\typg{\De}{v}{T}$, we deduce
	$\typ{\Ga_2}{\De_2}{\si(v)}{T_3}$, for $T_3 \equiv_{\be\Ga_2} \si(T) \equiv_{\be\Ga_2} T_2$.
	It follows, by conversion, that $\typ{\Ga_2}{\De_2}{\si(v)}{T_2}$.
\end{proof}

\section{Applications}
\label{sec:Applications}

\subsection{Parsing and Solving Equations}

\newcommand\Expr{{\tt expr}}
\newcommand\ExprS{{\tt expr\_S}}
\newcommand\ExprPlus{{\tt expr\_P}}
\newcommand\mkExpr{{\tt mk\_expr}}
\newcommand\parse{{\tt to\_expr}}
\newcommand\solve{{\tt solve}}
\newcommand\One{{\tt One}}
\newcommand\All{{\tt All}}
\newcommand\None{{\tt None}}
\newcommand\Solution{{\tt Solution}}

The context declarations and rewrite rules of~\autoref{fig:expression-parsing}
define a function \parse{} which parses a function of type $\Nat$ to $\Nat$ into
an expression of the form $a*x+b$ (represented by the term $\mkExpr~a~b$)
where $a$ and $b$ are constants.
The left-hand sides of the rewrite rules on \parse{} are $\la\Pi$-patterns.
This allows defining \parse{} by pattern matching in a way which looks under the binders.

The function \solve{} can then be used to solve
the linear equation $a*x+b=0$. The answer is either \None{} if there is no solution,
or \All{} if any $x$ is a solution or $\One~m~n$ if $-m/(n+1)$ is the only solution.

For instance, we have (writing $\One~-\frac{1}{3}$ for $\One~1~2$):
$$\solve~(\parse (\la x:\Nat. \Plus~x~(\Plus~x~(\Succ~x)) )) \redbg^* \One~-\frac{1}{3}.$$
\begin{figure}
	\begin{mdframed}
		\[
			\begin{array}{lll}
				\Expr	& :	& \Type.\\
				\mkExpr	& :	& \Nat \ar \Nat \ar \Expr.\\
				\ExprS	& : & \Expr \ar \Expr.\\
				\ExprS~(\mkExpr~a~b)	& \rw	& \mkExpr~a~(\Succ~b).\\
				\ExprPlus	& : & \Expr \ar \Expr \ar \Expr.\\
				\ExprPlus~(\mkExpr~a_1~b_1)~(\mkExpr~a_2~b_2)	& \rw	&
				\mkExpr~(\Plus~a_1~a_2)~(\Plus~b_1~b_2). \\[0.1cm]
				\parse	& :	& (\Nat \ar \Nat) \ar \Expr.\\
				\parse~(\la x:\Nat.0)					& \rw	& \mkExpr~0~0.\\
				\parse~(\la x:\Nat.\Succ~(f~x)) 		& \rw 	& \ExprS~(\parse~(\la x:\Nat.f~x)).\\
				\parse~(\la x:\Nat.x) 					& \rw 	& \mkExpr~(\Succ~0)~0.\\
				\parse~(\la x:\Nat.\Plus~(f~x)~(g~x))	& \rw	& \\
				\multicolumn{3}{r}{
					\ExprPlus~(\parse~(\la x:\Nat.f~x))~(\parse~(\la x:\Nat.g~x)).}\\[0.1cm]
				\Solution								& :		& \Type.\\
				\All									& :		& \Solution.\\
                \One									& :		& \Nat \ar \Nat \ar \Solution.\\
				\None									& :		& \Solution.\\
				\solve~(\mkExpr~0~0)					& \rw	&	\All.\\
				\solve~(\mkExpr~0~(S~n))				& \rw	&	\None.\\
				\solve~(\mkExpr~(S~n)~m)				& \rw	&	\One~m~n.
			\end{array}
		\]
	\caption{Parsing and solving linear equations}\label{fig:expression-parsing}
	\end{mdframed}
\end{figure}
By~\autoref{thm:product-compat-by-HO-confluence} and~\autoref{thm:strong-well-typed-pattern}
the global context of~\autoref{fig:expression-parsing} is well-formed.

\subsection{Universe Reflection}

In~\cite{Assaf15}, Assaf defines a version of the calculus of construction with explicit universe subtyping
thanks to an extended notion of conversion generated by a set of rewrite rules.
This work can easily be adapted to fit in the framework of the \lpm{}.
However, the confluence of the rewrite system holds only for rewriting modulo $\be$.

\section{Conclusion}
\label{sec:conclusion}

We have defined a notion of rewriting modulo $\be$ for the \lpm{}.
We achieved this by encoding the \lpm{} into the framework of Higher-Order Rewrite Systems.
As a consequence we also made available for the \lpm{} the confluence
criteria designed for the HRSs (see for instance~\cite{Nipkow-LICS-91}
or~\cite{DBLP:conf/hoa/Oostrom95}).
We proved that rewriting modulo $\be$ preserves typing.
We generalized the criterion for product compatibility, by replacing the assumption
of confluence by the confluence of the rewriting relation modulo $\be$.
We also generalized the criterion for well-typedness of rewrite rules to allow left-hand
to be $\la\Pi$-patterns.
These generalizations permit proving
subject reduction and uniqueness of types for more systems.

A natural extension of this work would be to consider rewriting
modulo $\be\eta$ as in Higher-Order Rewrite Systems.
This requires extending the conversion with $\eta$-reduction.
But, as remarked in~\cite{Geuvers92} (attributed to Nederpelt),
$\red_{\be\eta}$ is not confluent on untyped terms as the following example shows:
$$ \la y:B.y \leftarrow_\eta \la x:A.(\la y:B.y)x \redb \la x:A.x $$
Therefore properties such as product compatibility need to be proved another way.
We leave this line of research for future work.

For the $\la\Pi$-calculus a notion of higher-order pattern matching has been
proposed~\cite{DBLP:conf/popl/Pientka08} based on Contextual Type Theory
(CTT)~\cite{DBLP:journals/tocl/NanevskiPP08}.
This notion is similar to our.
However, it is defined using the notion of meta-variable (which is native in CTT)
instead of a translation into HRSs.

In~\cite{Blanqui15}, Blanqui studies the termination of the combination of $\be$-reduction
with a set of rewrite rules with matching modulo $\be\eta$ in the polymorphic $\la$-calculus.
His definition of rewriting modulo $\be\eta$ is direct and does not use any encoding.
This leads to a slightly different notion a rewriting modulo $\be$.
For instance, $\Diff (\la:R.\Exp~x)$ would reduce
to $\Fmult~(\Diff~(\la x:R.(\la y:R.y)~x))~(\la x:R.\Exp~((\la y:R.y)~x))$ instead
of $\Fmult~(\Diff~(\la x:R.x))~(\la x:R.\Exp~x)$.
It would be interesting to know whether the two definitions are equivalent with respect to confluence.

We implemented rewriting modulo $\be$ in Dedukti~\cite{Dedukti}, our type-checker for the \lpm.

\paragraph{Acknowledgments.}
The author thanks very much Ali Assaf, Olivier Hermant, Pierre Jouvelot and the reviewers for their very
careful reading and many suggestions.

\bibliographystyle{eptcs}
\bibliography{lfmtp15}


\end{document}